\begin{document}

\title{Flow Network Models for Online Scheduling Real-time Tasks on Multiprocessors
}


\author{Hyeonjoong Cho         \and
        Arvind Easwaran 
}


\institute{Hyeonjoong Cho \at
              Dept. of Computer and Information Science, Korea University, South Korea \\
              Tel.: +82-44-8601374\\
              \email{raycho@korea.ac.kr}           
           \and
           Arvind Easwaran \at
              School of Computer Engineering, Nanyang Technological University, Singapore
}

\date{Received: date / Accepted: date}

\maketitle

\begin{abstract}
We consider the \textit{flow network model} to solve the multiprocessor real-time task scheduling problems. Using the flow network model or its generic form, \textit{linear programming} (LP) formulation, for the problems is not new. However, the previous works have limitations, for example, that they are classified as \textit{offline} scheduling techniques since they establish a flow network model or an LP problem considering a very long time interval. In this study, we propose how to construct the flow network model for \textit{online} scheduling periodic real-time tasks on multiprocessors. Our key idea is to construct the flow network only for the active instances of tasks at the current scheduling time, while guaranteeing the existence of an optimal schedule for the future instances of the tasks. The optimal scheduling is here defined to ensure that all real-time tasks meet their deadlines when the total utilization demand of the given tasks does not exceed the total processing capacity. We then propose the flow network model-based polynomial-time scheduling algorithms. Advantageously, the flow network model allows the task workload to be collected unfairly within a certain time interval without losing the optimality. It thus leads us to designing three \textit{unfair-but-optimal} scheduling algorithms on both \textit{continuous} and \textit{discrete}-time models. Especially, our unfair-but-optimal scheduling algorithm on a discrete-time model is, to the best of our knowledge, the first in the problem domain. We experimentally demonstrate that it significantly alleviates the scheduling overheads, i.e., the reduced number of preemptions with the comparable number of task migrations across processors.
\keywords{Real-time scheduling \and Multicores \and Multiprocessors \and Flow networks \and Maximum flow problem \and Minimum cost flow problem}
\end{abstract}

\section{Introduction}\label{sec:introduction}

Multicore or multiprocessor platforms are becoming prevalent in numerous digital devices and this advance has been accelerated by the increasing computational demands of various emerging high-quality services. Alongside this trend, there has been a vast amount of research into multiprocessor real-time scheduling theories~\cite{ra2011}\cite{bbb2015}. Real-time systems are computing systems where their correct behaviors depend not only on the value of the computation but also when the results are produced~\cite{b2011}. Most research problems related to multiprocessors involve far more than a simple theoretical extension from uniprocessors to multiprocessors and thus, the real-time scheduling problems on multiprocessors are challenging.

Liu stated~\cite{l1969}: \textit{``Few of the results obtained for a single processor generalize directly to the multiple processor case; bringing in additional processors adds a new dimension to the scheduling problem. The simple fact that a task can use only one processor even when several processors are free at the same time adds a surprising amount of difficulty to the scheduling of multiple processors''}. This statement can be interpreted as saying that in uniprocessors, the constraint that each task is forbidden from executing simultaneously on more than one processor is implicit, because a single processor is the only processing capacity present in the system. By contrast, in multiprocessors, the constraint of \textit{no intra-task parallelism} becomes not only explicit but also interrelated with the other constraints, which significantly increases the problem's complexity.

\subsection{Motivational examples}
\label{sec:motiv_examples}

We provide several examples in this section to specifically illustrate our research motivation. First, we introduce some of the basic notions and terminology used throughout this study.  

We assume that each task $\tau_i$ is characterized by $(C_i, D_i, P_i)$ where $C_i$, $D_i$ and $P_i$ are its worst-case execution time, deadline, and period, respectively. When $D_i$ is identical to $P_i$, it is called the \textit{implicit} deadline and $\tau_i$ is then characterized by $(C_i, P_i)$. The \textit{hyper-period} $H$ of all tasks is defined as the least-common-multiple of all $P_i$. The $j$th instance (or \textit{job}) of the periodic task $\tau_i$ is denoted by $\tau_{i,j}$ and the arrival time of $\tau_{i,j}$ is denoted by $a_{i,j}$. The \textit{running rate} is the ratio of the execution time of a job (or a part of the job) to the time interval for the execution. For example, when $\tau_{i,j}$ has its execution time $C_{i,j}$ within its deadline $D_i$, its the running rate $r_{i,j}$ is $C_{i,j}/D_i$. 

A set $\mathbf{B}$ containing all $a_{i,j}$ of the given jobs is called a set of \textit{boundaries}. For convenience, each element (or boundary) of $\mathbf{B}$ is denoted by $b_k$ where an earlier $b_k$ has a lower $k$. The individual utilization $u_i$ of $\tau_i$ is defined as $C_i/P_i$ and the total utilization $U$ of the given task set is the sum of all $u_i$. We assume that $N$ tasks run on $M$ processors.

\begin{definition} (\textit{RT-optimality}) An \textit{optimal} real-time schedule meets all the task deadlines when the total utilization demand $U$ of a given task set does not exceed the total processing capacity $M$, which we call \textit{RT-optimal} in this study. 
\end{definition}


Several classes of RT-optimal task scheduling algorithms on multiprocessors have been developed for periodic implicit-deadline tasks. One of the well-known classes is \textit{fluid schedule}-based algorithms, where each task execution attempts to track the \textit{fluid schedule} that is known to be RT-optimal~\cite{bcpv1993,shab2003}. Here, the fluid schedule is defined as follows.

\begin{definition} (Fluid schedule) A schedule is said to be fluid if and only if at any time $t \geq 0$, every instance $\tau_{i,j}$ of task $\tau_i$ that arrives at time $a_{i,j}$ has been executed for exactly $(t - a_{i,j}) \times C_i/P_i$ time units~\cite{ghydvj2014}.
\end{definition}

Thus, to achieve RT-optimality, the fluid schedule assigns each task $\tau_i$ with a uniform running rate $r_i$ for all times, where $r_i$ is set to $C_i/P_i$ (or $u_i$). The fluid schedule is known to be RT-optimal but unrealistic, since it allocates a fraction of the computational resource to each task during each time unit, e.g., 1/3 of the processing capacity is allocated to a task per unit time.

The first example shown in Table~\ref{tbl:example_1} includes five tasks running on two processors. Its fluid schedule is illustrated in Figure~\ref{fig:motiv_example_1}. Their total utilization demand is $U=2$, which is equal to the total processing capacity 2 and thus, RT-optimality of this fluid schedule holds. 




\begin{table}[h]
	\caption{Example task set for fluid and boundary-fair schedules}
	\label{tbl:example_1}
	\renewcommand{\arraystretch}{0.8}
	\begin{tabular}{|c|c|c|c|}
	\hline
	Task & $C_i$ & $P_i$ & $u_i$ \\
	\hline 
	$\tau_1$ & 2 & 3 & 2/3\\
	$\tau_2$ & 2 & 6 & 1/3\\
	$\tau_3$ & 2 & 6 & 1/3\\
	$\tau_4$ & 3 & 9 & 1/3\\
	$\tau_5$ & 3 & 9 & 1/3\\
	\hline
	\end{tabular}
\end{table}

\begin{figure}[h]
	\includegraphics[scale=0.65]{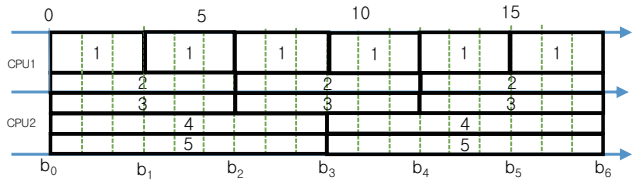}
 \caption{Fluid schedule for the tasks in Table~\ref{tbl:example_1}}
 \label{fig:motiv_example_1}
\end{figure}

One advantage of this fluid schedule is that it satisfies all of the interrelated constraints of the multiprocessor real-time scheduling problem, including the constraint of no intra-task parallelism, by using a single parameter $r_i$ per task. Provided that each $r_i$ is not greater than one and the sum of all $r_i$ is not greater than the total processing capacity, all tasks satisfy the deadlines without violating their constraints of no intra-task parallelism within the permitted total processing capacity. 

\begin{figure}[b]
\centering
\includegraphics[scale=0.65]{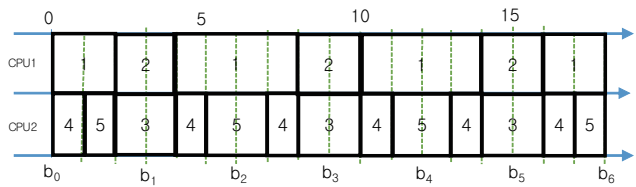}
\caption{Boundary-fair schedule for the tasks in Table~\ref{tbl:example_1}}
\label{fig:motiv_example_2}
\end{figure}

Several inheritors of the fluid schedule, which we refer to as the \textit{fluid schedule-based} algorithms, attempt to track the fluid schedule in order to obtain RT-optimality. To track the fluid schedule, they restrict the difference between the actual computational resource allocation and the fluid-schedule-based resource allocation for each task. The first algorithm of this type, proportionate-fair(\textit{Pfair}), strictly maintains the restriction at every time quantum~\cite{bcpv1993,bcpv1996}; its several descendants relax the restriction to being maintained at every boundary~\cite{zmm2003,crj2006,flspb2011,fky2008,zqmm2011}. Pfair and some descendants rely on a \textit{discrete-time model}, since they allocate the integral units of the computational-resource to each task, e.g., integer multiples of the system time unit for execution. In addition, they are known to support \textit{fairness} in the sense that the computational-resource allocated to each task is always proportionate to its $u_i$ during the time interval from zero to any time quantum for Pfair or to any boundary for its descendants.

Figure~\ref{fig:motiv_example_2} shows the task set in Table~\ref{tbl:example_1} scheduled by one of Pfair's descendants, \textit{boundary-fair scheduling (BF)}. It is observed that each task has the same amount of the computational resource within every time interval between two adjacent boundaries, i.e., the fairness is supported. For example, $\tau_1$ executes for 2 time units within every time interval [$b_k$,$b_{k+1}$].


Recently, it was observed that the scheduling overheads, including the number of preemptions and task migrations across processors, decrease as the fairness is relaxed~\cite{gvjd2011,nbngm2012}. Based on this observation, an unfair-but-optimal \textit{U-EDF} was proposed. The unfairness of U-EDF implies that some of the task workload is allowed to be advanced or delayed across boundaries beyond fairness. It implies that unlike the fluid schedule-based approaches, all tasks do not even need to run for every time interval between two adjacent boundaries and thus, the number of preemptions can be reduced. U-EDF relies on a \textit{continuous-time model}, since it may allocate a fractional computational-resource to a task, e.g., 1/3 processing capacity allocated to a task for a certain time interval.  


In addition to the reduced scheduling overheads, we believe that the \textit{unfairness} has several other significant advantages. For example, unfairness allows the total workload to be collected within certain time intervals and processors within the other idle intervals can turn into a different state to minimize their energy consumption. Figure~\ref{fig:motiv_example_3} shows an unfair schedule for five tasks in Table~\ref{tbl:example_2}. In this schedule, the workloads are aggregated around $b_3$ and thus, two processors have the chance to become slow or idle in both $[b_0,b_1]$ and $[b_5,b_6]$.


\begin{table}[h]
\renewcommand{\arraystretch}{0.8}
\caption{Example task set for slowing or idling processors}
\label{tbl:example_2}
\centering
\begin{tabular}{|c|c|c|c|}
\hline
Task & $C_i$ & $P_i$ & $u_i$ \\
\hline 
$\tau_1$ & 1 & 3 & 1/3\\
$\tau_2$ & 2 & 6 & 1/3\\
$\tau_3$ & 2 & 6 & 1/3\\
$\tau_4$ & 1 & 9 & 1/9\\
$\tau_5$ & 1 & 9 & 1/9\\
\hline
\end{tabular}
\end{table}

\begin{figure}[h]
\centering
\includegraphics[scale=0.65]{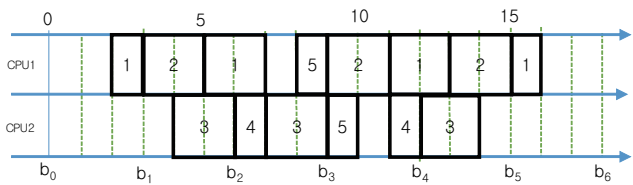}
\caption{Idling processors for the tasks in Table~\ref{tbl:example_2}}
\label{fig:motiv_example_3}
\end{figure}






\subsection{Contribution}

In order to support the unfair-but-optimal scheduling, we build a framework that allows us to manipulate the task workload efficiently across boundaries while holding RT-optimality, which is the primary focus of this study. Our contributions include the following: 

(1) We formulate the problem for online-scheduling the periodic implicit-deadline tasks on multiprocessors by specifying its constraints and we propose a \textit{flow network model} to solve the formulated problem. Again, using the flow network model or the LP formulation for multiprocessor real-time task scheduling is not new. However, the previous works have limitations, such as that some are used as offline scheduling techniques since they require a single flow network model or a single LP problem to be constructed considering a very long time interval from $0$ to $H$ of the given tasks~\cite{ec1981,msd2010,ljp2013} or that others are applicable only to the aperiodic tasks~\cite{h1974}. To overcome the limitations, we propose how to construct the flow network model only for the active instances of tasks at the current scheduling time while guaranteeing the existence of an RT-optimal schedule for the future instances of the tasks. 

(2) Based on the framework, we introduce an unfair-but-optimal multiprocessor scheduling algorithm, called \textit{flow network-based Earliest-Deadline-First} (fn-EDF), for periodic tasks on both continuous and discrete-time models. In particular, to the best of our knowledge, fn-EDF on a discrete-time model is the first online unfair-but-optimal scheduling algorithm in the given problem domain. We experimentally show that fn-EDF on the discrete-time model significantly reduces the number of preemptions with the comparable number of migrations against an existing BF algorithm. 

Table~\ref{tbl:comparison} compares fn-EDF with existing algorithms in terms of their problem domains. BF is fair-and-optimal on a discrete-time model and both U-EDF and RUN~\cite{rlmlb2011} are unfair-but-optimal on a continuous-time model. The original LP-based scheduling focused on constrained-deadline tasks, which have their deadlines less than their periods~\cite{ec1981,msd2010}. In Table~\ref{tbl:comparison}, it is assumed that LP-based scheduling can easily solve the implicit-deadline task scheduling problem by setting $D_i=P_i$. As discussed, the traditional LP-based scheduling is classified as an offline approach.
  
The remainder of this paper is organized as follows. Section~\ref{sec:fnm_implicit} presents the problem formulation and the corresponding flow network model for scheduling implicit-deadline periodic tasks; the unfair-but-optimal scheduling algorithms on both continuous and discrete-time models are also discussed. Section~\ref{sec:experiments} experimentally evaluate the performance of the proposed algorithm compared with an existing method. In Section~\ref{sec:discussion}, we discuss two issues about the proposed approach, complexity and extensibility. Section~\ref{sec:related} summarizes the related work. In Section~\ref{sec:conclusion} we give our conclusion.

\begin{table}[]
\renewcommand{\arraystretch}{0.9}
\caption{Comparison}
\label{tbl:comparison}
\centering
\begin{tabular}{|c|c|c|c|c|}
\hline
                                 &  \bf Algorithm     & \bf Continuous time model  & \bf Discrete time model &  \bf Reference  \\ \hline 
\multirow{5}{*}{\textbf{online}} & \multirow{2}{*}{fn-EDF} & RT-optimal  &  -         &  Section~\ref{subsec:flow_control} \\ \cline{3-5}
                                 &                         &  -          & RT-optimal &  Section~\ref{subsec:discreteness} \\ \cline{2-5}
                                 & BF                      &  -          & RT-optimal & \cite{zqmm2011} \\ \cline{2-5}
                                 & U-EDF                   & RT-optimal  & -          & \cite{nbngm2012} \\ \cline{2-5}
                                 & RUN                     & RT-optimal  & -          & \cite{rlmlb2011} \\ \hline
\bf offline                      & LP-based & RT-optimal  & RT-optimal &   \cite{ec1981,msd2010} \\
\hline
\end{tabular}
\end{table}

\section{Scheduling Algorithms}
\label{sec:fnm_implicit}

\subsection{System model}

We consider the implicit-deadline task $\tau_i$ characterized by $(C_i, P_i)$. We assume that the task parameters, $C_i$ and $P_i$, are multiples of the system time unit. The \textit{active} job $\tau_{i}(t)$ of $\tau_i$ at time $t$ has its arrival time $a_{i}(t)$ subject to $t \in [a_{i}(t),a_{i}(t)+P_i)$. At time $t$, $\tau_{i}(t)$ has its remaining execution time $c_i(t)$, where $0 \leq c_i(t) \leq C_i$. The active job set is denoted by $\boldsymbol{\tau}(t)$. Both $a_{i,j}+P_i$ and $a_{i}(t)+P_i$ correspond to $d_{i,j}$ and $d_i(t)$, respectively and called \textit{absolute deadlines} of the job. 

$b_{k}$ is the $k$th boundary of $\mathbf{B}$. The earlier $b_k$ is assumed to have the lower index $k$. When we consider the active jobs only, $\mathbf{B}=\{d_i(t)| 1 \leq i \leq N \} \cup \{t\}$ contains $K$ absolute deadlines (boundaries) of the active jobs, where $K$ is less than or equal to $N$. The window $W_k$ is the time interval ranging [$b_k$,$b_{k+1}$). The length $l_k$ of $W_k$ is $b_{k+1}-b_k$ and the permitted processing capacity $Cap(W_k)$ of $W_k$ is upper-bounded by $M \times l_k$. The allocated execution time for $\tau_{i}$ within $W_k$ is denoted by $X_{i,k}$.

\subsection{Problem formulation with an example}

Assume that 5 tasks in Table~\ref{tbl:example_1} are running on 2 processors. To consider active jobs at the current time $t$, we focus on the time interval from the current time $t$ to the latest deadline of all active jobs $d^{max}(t)=max_{\forall i}{\{d_{i}(t)\}}$. In $[t,d^{max}(t)]$, the boundary set is $\mathbf{B}=\{d_i(t)|1\leq i\leq N\} \cup \{t\}$. For the given example, all active jobs at current time 0 are illustrated in white in Figure~\ref{fig:step1} and $d^{max}(0)$ is 9. For the active jobs, $X_{i,k}$ are assigned as shown in Figure~\ref{fig:step1}.

\begin{figure}[h!]
\centerline{
	\includegraphics[scale=0.65]{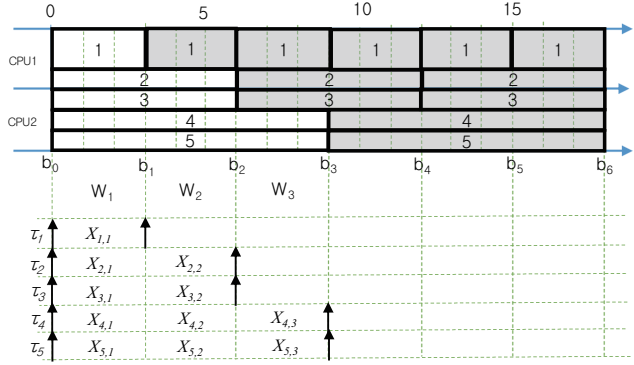}
	}
\caption{Active jobs at time 0}
\label{fig:step1}
\end{figure}

Next, we formulate a linear programming problem for the active jobs at time $0$.


Equation~\ref{eqt:jcc1} defines the constraint that each active job $\tau_{i}(t)$ completes its execution within the permitted time interval, i.e., $[a_{i}(t),d_{i}(t))$. We call this the \textit{job completion constraint (JCC)}.

\begin{equation} \label{eqt:jcc1}
\begin{split}
&X_{1,1} = C_1 \\
&X_{2,1}+X_{2,2} = C_2 \\
&X_{3,1}+X_{3,2} = C_3 \\
&X_{4,1}+X_{4,2} + X_{4,3} = C_4 \\ 
&X_{5,1}+X_{5,2} + X_{5,3} = C_5 \\ 
\end{split}
\end{equation}

Inequality~\ref{eqt:pcc1} is the constraint that the sum of the allocated execution times of the active jobs within a given window does not exceed the permitted processing capacity of the window. We call this the \textit{processing capacity constraint (PCC)}.

\begin{equation} \label{eqt:pcc1}
\begin{split}
X_{1,1}+X_{2,1}+X_{3,1}+X_{4,1}+X_{5,1} & \leq Cap(W_1) \\
X_{2,2}+X_{3,2}+X_{4,2}+X_{5,2} & \leq Cap(W_2)  \\
X_{4,3}+X_{5,3} & \leq Cap(W_3)  \\
\end{split}
\end{equation}

Inequality~\ref{eqt:ipc1} is the constraint that each active job within a given window does not simultaneously occupy more than one processor. We call this the constraint of \textit{no intra-task parallelism (NIP)}. 
	
\begin{equation} \label{eqt:ipc1}
\begin{split}
&X_{1,1}, X_{2,1}, X_{3,1}, X_{4,1}, X_{5,1} \leq l_1 \\
&X_{2,2}, X_{3,2}, X_{4,2}, X_{5,2} \leq l_2  \\
&X_{4,3}, X_{5,3} \leq l_3  \\
\end{split}
\end{equation}

To facilitate the discussion, the white region ranging $[t,d^{max}(t))$ in Figure~\ref{fig:step1} is called the \textit{active job area} at time $t$ and it is denoted by $AJA(t)$. In other words, $AJA(t)$ is a collection of the maximum capacity per window that can be utilized to execute the active jobs at $t$. For each $AJA(t)$, three types of constraints, JCC, PCC and NIP, are defined. No objective function is required and a feasible solution that satisfies all the constraints is sufficient for our purpose.


In the constraints above, all right-hand side values except $Cap(W_k)$ can be determined easily when $\mathbf{B}$ for the active jobs is fixed. The rationale behind the determination of $Cap(W_k)$ is explained using the following example. Figure~\ref{fig:step1} indicates that some capacity of $W_2$ is used for the active jobs, i.e., \{$\tau_{2,1}$, $\tau_{3,1}$, $\tau_{4,1}$, $\tau_{5,1}$\}, and the remainder is reserved for a future job $\tau_{1,2}$. Thus, each $Cap(W_k)$ should be determined in order to support the schedulabilities of the current active jobs and future jobs as well. Therefore, we determine $Cap(W_k)$ using the fluid schedule. For example, $Cap(W_3)$ is calculated by reserving the execution time of \{$\tau_{1,3}$, $\tau_{2,2}$, $\tau_{3,2}$\} within $W_3$ from the permitted processing capacity of $W_3$, e.g., $Cap(W_3)=(M-u_1-u_2-u_3)\times l_3$. When a schedulable implicit-deadline task set is given, since we reserve a suitable amount of processing capacity for future jobs based on the fluid schedule, the schedulability of future jobs remains valid regardless of our scheduling decision within the current AJA.

To schedule real-time tasks up to their hyper-period, we assume that AJA is constructed repeatedly at each boundary (scheduling event). Whenever AJA is constructed at time $t$, the right-hand sides of JCC equations are filled with the remaining execution time $c_i(t)$ of each $\tau_i(t)$. PCCs and NIPs are formed in the same manner as shown for time 0. Since a new AJA is constructed at each boundary, some $X_{i,k}$ in the previous AJA are recalculated. In this example, at time 0, only $X_{i,1}$ in $W_1$ are used for scheduling. The other allocated execution times $X_{i,k}$ are recalculated in the next AJA. 

Figure~\ref{fig:step2} shows both AJA(3) and AJA(6). We assume that the indices of $b_k$ and $W_k$ are updated for each AJA($t$). Once all $X_{i,k}$ in $W_k$ are obtained, the actual schedule within $W_k$ can easily be determined, e.g., using McNaughton's wrap around algorithm~\cite{m1959}.

\begin{figure}[h!]
\centering
	\subfloat{
		\includegraphics[scale=0.65]{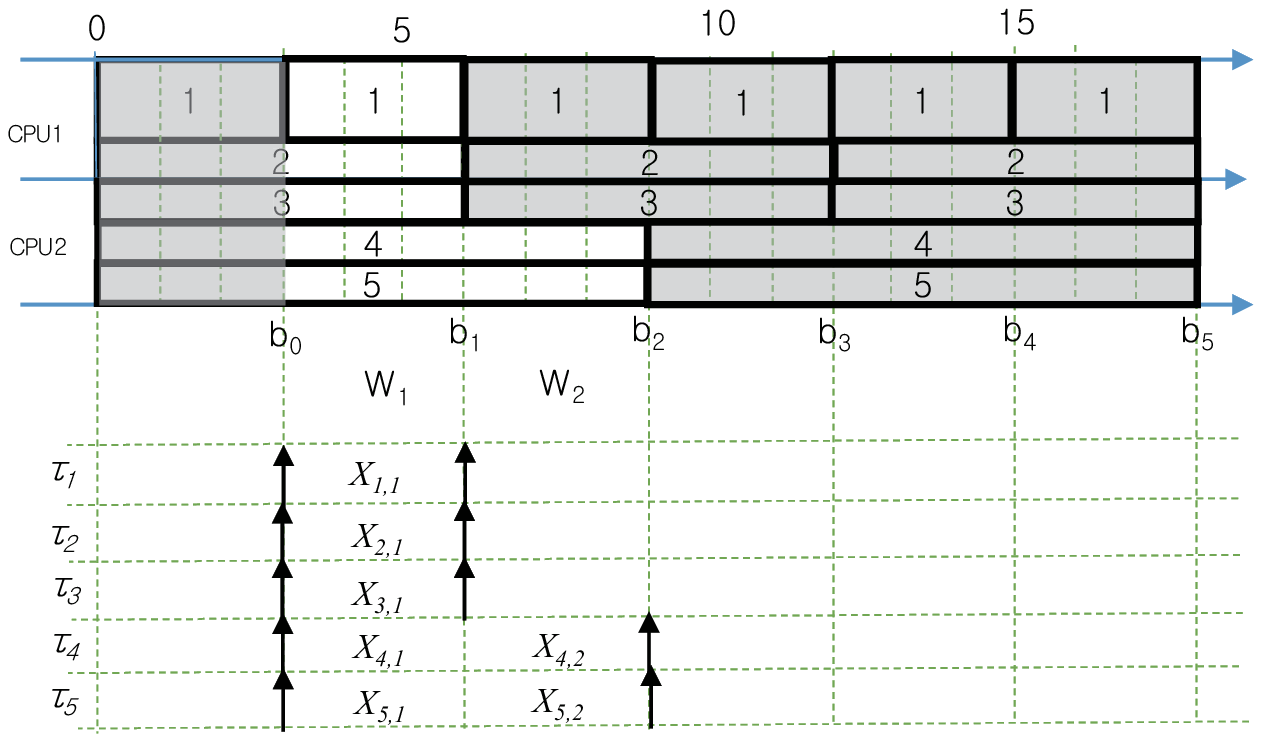}
	}\\
	\subfloat{
		\includegraphics[scale=0.65]{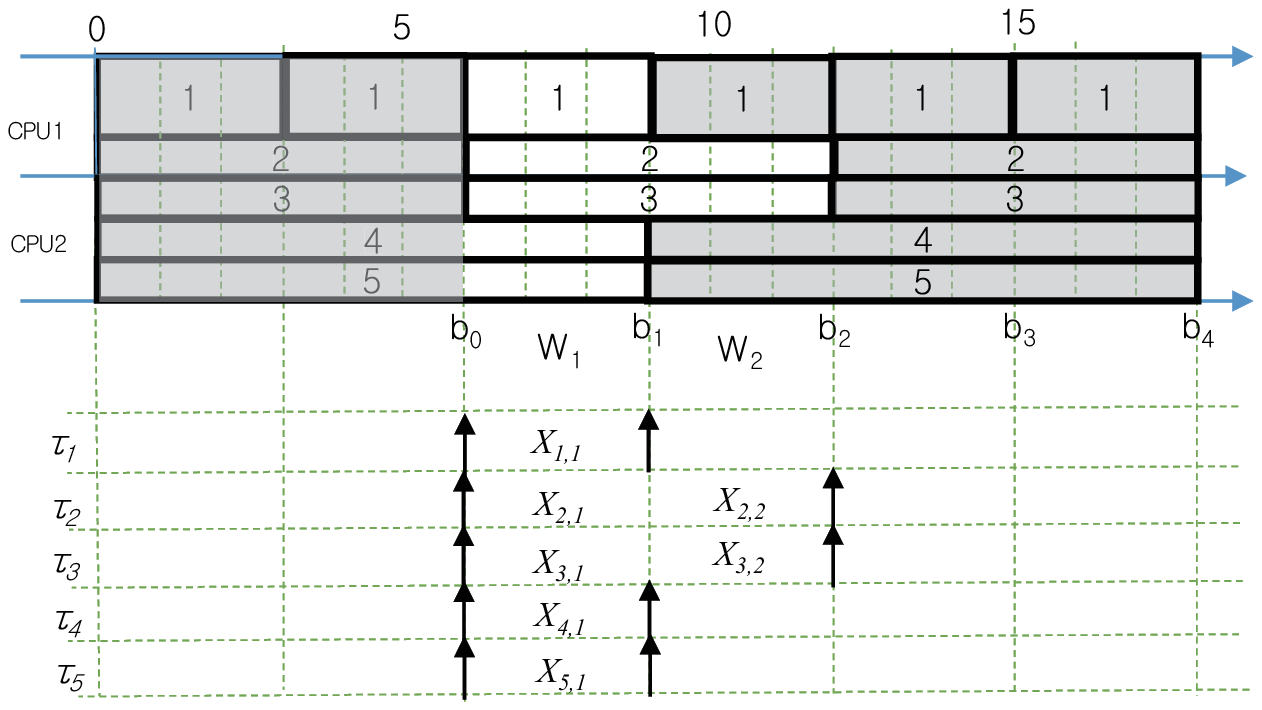}
	}	
\caption{Active jobs at time 3 and 6}
\label{fig:step2}
\end{figure}

We emphasize that a solution satisfying three types of constraints may yield the different running rates of a job between two adjacent windows, e.g., $r_{i,k} \neq r_{i,k+1}$, which implies that this approach can generate an unfair schedule.

\subsection{Problem formulation with the general task model}
\label{sec:formulation_implicit}

Figure~\ref{fig:general_implicit} shows a set of general tasks with implicit deadlines. At the current time $t_1$, $AJA(t_1)$ is constructed with three types of constraints. We set the current time $t_1$ to the boundary $b_0$. The first window $W_1$ then ranges from $b_0$ to $b_1$ (= $t_2$). The active job $\tau_{i}(t_1)$ is assumed to have its remaining execution time $c_{i}(t_1)$ at $t_1$. Within the time interval [$t_1$, $d^{max}$($t_1$)), a series of windows, \{$W_1$, ..., $W_K$\}, is built, based on $\mathbf{B}$. The number of windows $K$ for $AJA(t_1)$ is less than or equal to $N$.

\begin{figure}[h!]
\centerline{
	\includegraphics[scale=0.6]{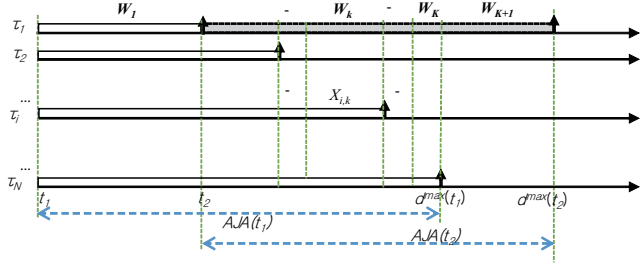}
	}
\caption{General task model with implicit deadlines and AJAs}
\label{fig:general_implicit}
\end{figure}

For convenience, we define two sets for the current AJA as follows:

\begin{equation} 
\boldsymbol{K}(t_s,t_e) = \{ k | W_k \subset [t_s, t_e] \},
\end{equation}

\begin{equation} 
\boldsymbol{J}(k, t) = \{ i | W_k \subset  [a_{i}(t), d_{i}(t)] \}\text{.}
\end{equation}

$\boldsymbol{K}(t_s,t_e)$ contains all indices of the window $W_k$ which are placed in the time interval [$t_s$,$t_e$]. $\boldsymbol{J}(k, t)$ contains all indices $i$ of the active jobs at time $t$ which are still active in $W_k$. Using these two sets, the constraints for $AJA(t_1)$ are defined as follows:

\begin{equation} \label{eqt:jcc_g}
\sum_{\forall k \in \boldsymbol{K}(t_1,d_{i}(t_1))}{X_{i,k}}=c_{i}(t_1), \quad 1 \leq \forall i \leq N,
\end{equation}

\begin{equation} \label{eqt:pcc_g}
\sum_{\forall i \in \boldsymbol{J}(k,t_1)}{X_{i,k}} \leq Cap(W_k), \quad 1 \leq \forall k \leq K,
\end{equation}

\begin{equation}  \label{eqt:ipc_g}
X_{i,k} \leq l_k, \quad 1 \leq \forall i \leq N, 1 \leq \forall k \leq K \text{.}
\end{equation}

For PCC, $Cap(W_k)$ is set as follows:

\begin{equation} \label{eqt:cap_g}
Cap(W_k)=\left[ M-\sum_{ \forall i \notin \boldsymbol{J}(k,t_1)}{C_i/P_i}\right] \times l_k \text{.}
\end{equation}

At a scheduling event (or boundary), the corresponding AJA is established and three types of constraints are defined. After a feasible solution is found for AJA, the part of the feasible solution for $W_1$ is used to allocate the computational resource to each task for their execution during $W_1$. At the next scheduling event, the next AJA is established and a similar procedure follows. This iteration continues until $H$, which provides the schedulability for the given tasks. Note that the LP formulation includes no objective function and a feasible solution that satisfies all constraints is sufficient for our purpose. 

RT-optimality of the proposed approach can be proved by showing that as long as the fluid schedule can be defined for a given task set, the proposed approach finds at least a feasible schedule.

\begin{lemma} \label{lem:1}
If a fluid schedule exists for the given periodic implicit-deadline task set, the fluid schedule satisfies the three types of constraints in the first AJA($0$).
\end{lemma}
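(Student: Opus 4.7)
The plan is to exhibit the fluid allocation itself as a feasible solution to the LP in $AJA(0)$: for every active job $\tau_i(0)$ and every window $W_k \subset [0, d_i(0))$, set
\[
X_{i,k} \;=\; u_i \cdot l_k \;=\; (C_i/P_i) \cdot l_k,
\]
which is exactly the amount of execution the fluid schedule assigns to $\tau_i$ across $W_k$. The remainder of the argument is a substitution check for each of the three constraint families, exploiting the fact that at $t_1 = 0$ every task has a freshly released job so that $a_i(0)=0$, $d_i(0)=P_i$, and $c_i(0)=C_i$, and that $\boldsymbol{J}(k,0) = \{i : P_i \geq b_{k+1}\}$.

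First I would verify JCC (Eq.~\ref{eqt:jcc_g}). Because the windows $\{W_k : k \in \boldsymbol{K}(0, d_i(0))\}$ partition $[0, P_i)$ and $\sum_k l_k = P_i$, substituting the fluid values makes the left-hand side telescope to $u_i \sum_k l_k = u_i P_i = C_i = c_i(0)$, so JCC holds with equality. Next I would verify NIP (Eq.~\ref{eqt:ipc_g}): since a fluid schedule exists only if $u_i \leq 1$ for every task, we immediately get $X_{i,k} = u_i l_k \leq l_k$.

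The main step is PCC (Eq.~\ref{eqt:pcc_g}). After substituting the fluid values and the definition of $Cap(W_k)$ from Eq.~\ref{eqt:cap_g}, the inequality becomes, after dividing by $l_k > 0$,
\[
\sum_{i \in \boldsymbol{J}(k,0)} u_i \;\leq\; M - \sum_{i \notin \boldsymbol{J}(k,0)} u_i,
\]
which rearranges to $\sum_{i=1}^{N} u_i \leq M$, i.e., $U \leq M$. This is exactly the hypothesis under which the fluid schedule is defined, so PCC holds in every window $W_k$.

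I do not anticipate any real obstacle; the lemma is essentially a sanity check that the LP is a faithful relaxation of the fluid schedule at time $0$. The only point that warrants care is bookkeeping at the boundary $t_1 = 0$, namely that every job is simultaneously fresh (so $c_i(0)=C_i$) and that $d^{max}(0) = \max_i P_i$, which makes the window decomposition of $[0, P_i)$ well defined for every task. Once this is stated, feasibility reduces to the three substitutions above.
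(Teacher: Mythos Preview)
Your proposal is correct and follows essentially the same approach as the paper: you set $X_{i,k}=u_i l_k$ and verify JCC, NIP, and PCC by direct substitution, with the PCC check reducing to $U\le M$. The paper's proof is identical in substance, differing only in that it writes the PCC step as $\sum_{i\in\boldsymbol{J}(k,0)} X_{i,k} = l_k\sum_i r_i - \sum_{i\notin\boldsymbol{J}(k,0)} u_i l_k \le M l_k - \sum_{i\notin\boldsymbol{J}(k,0)} u_i l_k = Cap(W_k)$ rather than first dividing through by $l_k$.
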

\begin{proof}
The fluid schedule guarantees schedulability by providing each task with the processing capacity based on its running rate $r_i$ that is equivalent to its individual utilization $u_i=C_i/P_i \leq 1$. For AJA($0$), the fluid schedule ensures that every $X_{i,k}$ is $r_i \times l_k=u_i \times l_k$. First, JCCs for AJA($0$) are satisfied as follows. 
\begin{equation}
\sum_{\forall k \in \boldsymbol{K}(0,d_{i}(0))}{X_{i,k}} = u_i\times\sum_{\forall k \in \boldsymbol{K}(0,d_{i}(0))}{l_k}=u_iP_i=C_i.
\end{equation}
Second, NIPs for AJA($0$) are satisfied since $X_{i,k}=r_i \times l_k =u_i \times l_k \leq l_k$. 
Third, PCCs for AJA($0$) are satisfied as follows.
Since the fluid schedule is assumed, the following equation holds.
\begin{equation}
\begin{split}
\sum_{\forall i \in \boldsymbol{J}(k,0)}{X_{i,k}} + \sum_{ \forall i \notin \boldsymbol{J}(k,0)}{X_{i,k}} = l_k \times \sum_{i=1}^N{r_i} \\
\Leftrightarrow \sum_{\forall i \in \boldsymbol{J}(k,0)}{X_{i,k}} =l_k \times \sum_{i=1}^N{r_i} - \sum_{ \forall i \notin \boldsymbol{J}(k,0)}{X_{i,k}}
\end{split}
\end{equation}
By the assumption of the fluid schedule, $\sum_{i=1}^N{r_i} \leq M$ and $\sum_{ \forall i \notin \boldsymbol{J}(k,0)}{X_{i,k}}=\sum_{ \forall i \notin \boldsymbol{J}(k,0)}{u_{i}l_k}$. Thus, 
\begin{equation}
\sum_{\forall i \in \boldsymbol{J}(k,0)}{X_{i,k}} \leq l_k \times M - \sum_{ \forall i \notin \boldsymbol{J}(k,0)}{u_il_k}.
\end{equation}
From Equation~\ref{eqt:cap_g}, $\sum_{\forall i \in \boldsymbol{J}(k,0)}{X_{i,k}} \leq Cap(W_k)$. 
\end{proof}

\begin{theorem} \label{thm:optimality}
If a fluid schedule exists for the given periodic implicit-deadline task set, the proposed approach provides a feasible solution satisfying all constraints established by the task set.
\end{theorem}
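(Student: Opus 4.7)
The plan is to prove this by induction on the sequence of scheduling events $0=t_0<t_1<t_2<\cdots$, where each $t_s$ is a boundary at which a new AJA is constructed and a fresh feasibility problem is solved. Lemma~\ref{lem:1} supplies the base case directly: the fluid allocation $X_{i,k}=u_i l_k$ is a feasible solution at $AJA(0)$.

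For the inductive step, I would assume that $AJA(t_s)$ admits a feasible solution $\{X_{i,k}^{(s)}\}$ and that the algorithm commits the $W_1^{(s)}$ portion of this solution during $[t_s,t_{s+1})$. The idea is then to exhibit an explicit candidate solution for $AJA(t_{s+1})$ built from two pieces: (i) for each job that is still active at $t_{s+1}$ with unchanged deadline, recycle the leftover allocations by setting $X_{i,k}^{(s+1)}:=X_{i,k+1}^{(s)}$ under the natural re-indexing of windows; and (ii) for each task $\tau_i$ whose instance becomes newly active at $t_{s+1}$ (either because its previous instance had deadline $t_{s+1}$, or because $AJA(t_{s+1})$ extends past $d^{\max}(t_s)$), assign its fluid allocation $X_{i,k}^{(s+1)}:=u_i l_k^{(s+1)}$ across the windows covering the new instance.

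Verifying JCC and NIP should be routine. For the recycled part, telescoping the JCC of $AJA(t_s)$ gives $\sum_{k\geq 1} X_{i,k}^{(s+1)}=c_i(t_s)-X_{i,1}^{(s)}=c_i(t_{s+1})$, and each inherited entry already satisfies $X_{i,k+1}^{(s)}\leq l_{k+1}^{(s)}=l_k^{(s+1)}$; for the fluid part, summing $u_i l$ over the new instance's windows recovers $u_iP_i=C_i=c_i(t_{s+1})$, and $u_i\leq 1$ secures NIP as in Lemma~\ref{lem:1}.

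I expect the PCC to be the main obstacle. The argument should hinge on the precise form of $Cap$ in Equation~\ref{eqt:cap_g}: when task $\tau_i$ migrates out of the reserve set at $t_s$ (namely $i\notin \boldsymbol{J}(k,t_s)$) and into the active set at $t_{s+1}$ (namely $i\in \boldsymbol{J}(k-1,t_{s+1})$) because its new instance starts at $t_{s+1}$, the capacity formula releases exactly $u_i l_k$ units, which is precisely what the fluid allocation consumes on behalf of that new instance. Summing this accounting over all newly active tasks and adding the original PCC of $AJA(t_s)$ applied to the recycled allocations should yield the PCC of each corresponding window in $AJA(t_{s+1})$. The extension windows beyond $d^{\max}(t_s)$ admit a cleaner version of the same inequality since $\sum_i u_i\leq M$ bounds the pure fluid load directly. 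Once PCC is verified, the induction closes and the theorem follows.
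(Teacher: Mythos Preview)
Your proposal is correct and follows essentially the same route as the paper: induction on boundaries, with the base case supplied by Lemma~\ref{lem:1} and the inductive candidate for $AJA(t_{s+1})$ formed by recycling the leftover allocations $X^{*}_{-1}$ for surviving jobs together with the fluid allocations $u_i l_k$ for the newly arrived instance(s), followed by a direct verification of JCC, NIP, and PCC via the capacity formula~\eqref{eqt:cap_g}. Your treatment is in fact slightly more careful than the paper's, which sorts tasks by deadline and writes the step only for a single new instance $\tau_1(t_2)$, whereas you allow several tasks to share the boundary $t_{s+1}$ and explicitly note the extension windows beyond $d^{\max}(t_s)$.
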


\begin{proof}
The proof is obtained by induction on the increasing boundary. We assume that AJA(0) has a feasible solution that satisfies all constraints. Then, we try to show that if AJA($t_1$) has a feasible solution, AJA($t_2$) has at least one feasible solution, where $t_2$ is the next boundary to $t_1$ as shown in Figure~\ref{fig:general_implicit}. Note that the index $k$ is not updated at the next scheduling event point $t_2$ simply for convenience in the proof.\\
\textit{Basis.} From Lemma~\ref{lem:1}, AJA($t_1$) in Figure~\ref{fig:general_implicit} has at least one feasible solution, when $t_1=0$. Here, assume that the active jobs at $t_1$ are sorted in the increasing deadline order.\\
\textit{Induction step.}
Assume that $\boldsymbol{X^*}$ is a feasible solution for AJA($t_1$). After $\{X_{i,1}^{*} | 1 \leq \forall i \leq N \}$ is allocated for $W_1$, $\boldsymbol{X^{*}_{-1}}=\boldsymbol{X^*} \setminus \{X_{i,1}^{*} | 1 \leq \forall i \leq N \}$ is still a feasible solution for the remaining area of AJA($t_1$). Therefore, $\boldsymbol{X^{*}_{-1}}$ satisfies the following equations.

\begin{equation} \label{eqt:jcc_proof}
\sum_{\forall k \in \boldsymbol{K}(t_2,d_{i}(t_1))}{X^*_{i,k}}=c_{i}(t_2), \quad 2 \leq \forall i \leq N,
\end{equation}


\begin{equation} \label{eqt:pcc_proof}
\begin{split}
\sum_{\forall i \in \boldsymbol{J}(k,t_1)}{X^*_{i,k}} \leq Cap^*(W_k)=\left[ M-\sum_{ \forall i \notin \boldsymbol{J}(k,t_1)}{u_i}\right] \times l_k, \quad 2 \leq \forall k \leq K \\
\end{split}
\end{equation}


\begin{equation}  \label{eqt:ipc_proof}
X^*_{i,k} \leq l_k, \quad 2 \leq \forall i \leq N, \quad 2 \leq \forall k \leq K
\end{equation}

At $t_2$ when a new job $\tau_{1}(t_2)$ arrives, AJA($t_2$) is established using the fluid schedule. For AJA($t_2$) that covers the array of windows \{$W_2$,$W_3$,...$W_{K+1}$\}, we define three types of constraints as follows.


\begin{equation} \label{eqt:jcc1_proof}
\sum_{\forall k \in \boldsymbol{K}(t_2,d_{i}(t_2))}{X_{i,k}}=c_{i}(t_2), \quad 1 \leq \forall i \leq N
\end{equation}


\begin{equation} \label{eqt:pcc1_proof}
\begin{split}
\sum_{\forall i \in \boldsymbol{J}(k,t_2)}{X_{i,k}} \leq Cap'(W_k)=\left[ M-\sum_{ \forall i \notin \boldsymbol{J}(k,t_2)}{u_i}\right] \times l_k, \quad 2 \leq \forall k \leq K+1 \\
\end{split}
\end{equation}


\begin{equation}  \label{eqt:ipc1_proof}
X_{i,k} \leq l_k, \quad 1 \leq \forall i \leq N, \quad 2 \leq \forall k \leq K+1
\end{equation}

For AJA($t_2$), we select $\boldsymbol{X^{*}_{-1}} \cup \{X_{1,k}^{'} | 2\leq \forall k\leq K+1, \forall X_{1,k}^{'}=u_1l_k \}$ as a candidate solution. Now, we need to check if the candidate solution satisfies all constraints.

For JCCs, let Equation~\ref{eqt:jcc1_proof} be split as follows.



\begin{equation} \label{eqt:jcc2_proof}
=
\begin{cases}
\sum_{\forall k \in \boldsymbol{K}(t_2,d_{i}(t_2))}{X_{i,k}}=c_{i}(t_2), & \text{if } 2 \leq i \leq N \\
\sum_{\forall k \in \boldsymbol{K}(t_2,d_{1}(t_2))}{X_{1,k}}=C_{1}, & \text{if } i=1
\end{cases}
\end{equation}

Since $\boldsymbol{K}(t_2,d_{i}(t_2))=\boldsymbol{K}(t_2,d_{i}(t_1))$ when $2 \leq i \leq N$, Equation~\ref{eqt:jcc2_proof} becomes the following.

\begin{equation} \label{eqt:jcc3_proof}
=
\begin{cases}
\sum_{\forall k \in \boldsymbol{K}(t_2,d_{i}(t_1))}{X_{i,k}}=c_{i}(t_2), & \text{if } 2 \leq i \leq N \\
\sum_{\forall k \in \boldsymbol{K}(t_2,d_{1}(t_2))}{X_{1,k}}=C_{1}, & \text{if } i=1
\end{cases}
\end{equation}

The candidate solution satisfies the first case of Equation~\ref{eqt:jcc3_proof} since Equation~\ref{eqt:jcc_proof} holds. It also satisfies the second case of Equation~\ref{eqt:jcc3_proof} as follows.

\begin{equation} \label{eqt:jcc4_proof}
\sum_{\forall k \in \boldsymbol{K}(t_2,d_{1}(t_2))}{X'_{1,k}} = \sum_{\forall k \in \boldsymbol{K}(t_2,d_{1}(t_2))}{u_1l_k} = u_1P_1 = C_1
\end{equation}

Consequently, the candidate solution satisfies JCCs for AJA($t_2$).

For PCCs, the left-hand side of Inequality~\ref{eqt:pcc1_proof} is modified as follows. 

\begin{equation} \label{eqt:pcc2_proof}
\sum_{\forall i \in \boldsymbol{J}(k,t_2)}{X_{i,k}} = X_{1,k}+\sum_{\forall i \in \boldsymbol{J}(k,t_1)}{X_{i,k}} , \quad 2 \leq \forall k \leq K+1
\end{equation}

If the variables above are substituted with the candidate solution, it becomes the following due to Equation~\ref{eqt:pcc_proof}.

\begin{equation} \label{eqt:pcc3_proof}
\begin{split}
X'_{1,k}+\sum_{\forall i \in \boldsymbol{J}(k,t_1)}{X^*_{i,k}} \leq u_1l_k + Cap^*(W_k) = u_1l_k + \left[ M-\sum_{ \forall i \notin \boldsymbol{J}(k,t_1)}{u_i}\right] \times l_k \\
=\left[ M-\sum_{ \forall i \notin \boldsymbol{J}(k,t_1)}{u_i}+u_1\right] \times l_k \\
\end{split}
\end{equation}

\begin{equation}
\begin{split}
=
\begin{cases}
\left[ M-u_1+u_1\right] \times l_k & \text{,if }k=2 \\
\left[ M-(u_1+u_2)+u_1\right] \times l_k & \text{,if }k=3 \\
\text{...}&\\
\left[ M-(u_1+u_2+...+u_N)+u_1\right] \times l_k & \text{,if }k=K+1 \\
\end{cases}
\\
=\left[ M-\sum_{ \forall i \notin \boldsymbol{J}(k,t_2)}{u_i}\right]\cdot l_k = Cap'(W_k), \quad 2\leq \forall k\leq K+1&
\end{split}
\end{equation}

Therefore, the candidate solution satisfies PCCs.

In addition, since $\boldsymbol{X_{-1}^*}$ satisfies NIPs and $X'_{1,k}=u_1l_k \leq l_k$, $2 \leq \forall k \leq K+1$, the candidate solution satisfies NIPs.

In summary, the candidate solution satisfies all constraints of AJA($t_2$). It implies that if a feasible solution exists for AJA($t_1$), then at least one feasible solution for AJA($t_2$) can be found. The feasible solution for every AJA from time 0 to $H$ is ensured using the induction steps.
\end{proof}

\begin{corollary}
The proposed approach is an RT-optimal real-time scheduling algorithm.
\end{corollary}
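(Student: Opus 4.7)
The plan is to derive the corollary directly from Theorem~\ref{thm:optimality} by linking three facts: (i) the existence hypothesis for a fluid schedule is exactly the RT-optimality hypothesis $U \leq M$, (ii) a feasible solution to JCC/PCC/NIP at every AJA implies all deadlines are met, and (iii) the per-window allocations $X_{i,k}$ can be turned into an actual schedule on $M$ processors respecting no intra-task parallelism via a standard tool such as McNaughton's wrap-around algorithm.

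First, I would observe that for periodic implicit-deadline tasks a fluid schedule exists whenever each $u_i = C_i/P_i \leq 1$ (automatic since $C_i \leq P_i$) and $\sum_i u_i = U \leq M$. This is precisely the hypothesis in the definition of RT-optimality. Hence, whenever $U \leq M$, the premise of Theorem~\ref{thm:optimality} is met and the induction in its proof yields a feasible solution $\boldsymbol{X^*}$ for $AJA(t)$ at every scheduling event $t$ from $0$ up to the hyper-period $H$.

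Next, I would argue that such feasibility at every AJA guarantees deadline satisfaction. By JCC (Equation~\ref{eqt:jcc_g}), the allocations assigned to $\tau_i(t)$ across the windows in $[t, d_i(t)]$ sum to its remaining execution time $c_i(t)$, so each active job receives its required budget strictly before its absolute deadline. Because AJAs are reconstructed at each boundary and the inductive step of Theorem~\ref{thm:optimality} carries the un-consumed portion $\boldsymbol{X^*_{-1}}$ forward while correctly admitting every newly arriving job, no remaining execution time is ever lost, and the deadline guarantee persists through every boundary up to $H$; periodicity then extends it indefinitely.

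Finally, I would explain how the window-level allocations are realized as a concrete schedule. Within $W_1$ of the current AJA, PCC gives $\sum_i X_{i,1} \leq Cap(W_1) \leq M \cdot l_1$ and NIP gives $X_{i,1} \leq l_1$ for every $i$. These are exactly the input conditions under which McNaughton's wrap-around algorithm~\cite{m1959} produces a feasible assignment of tasks to the $M$ processors over $[b_0,b_1)$ that never schedules the same task simultaneously on two processors. Combining the three items, every task meets its deadline whenever $U \leq M$, which is the definition of RT-optimality. There is essentially no obstacle beyond this bookkeeping; the entire substantive work was done in Lemma~\ref{lem:1} and Theorem~\ref{thm:optimality}, and the corollary is a short epilogue that names the consequence.
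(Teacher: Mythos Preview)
Your argument is correct and follows essentially the same route as the paper: the paper's proof simply chains the implication ``$U\le M$ $\Rightarrow$ a fluid schedule exists'' with Theorem~\ref{thm:optimality} to conclude RT-optimality, while you additionally spell out why feasibility of each AJA entails deadline satisfaction (via JCC) and why the per-window allocations are realizable on $M$ processors (via McNaughton), points the paper leaves implicit here but mentions elsewhere.
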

\begin{proof}
According to Theorem~\ref{thm:optimality}, if the fluid schedule exists for a given task set, our technique provides the feasible solutions for the repeated AJAs. It is also known that if $U$ of the task set is less than or equal to $M$, RT-optimal fluid schedule exists. Therefore, if $U$ of a task set is less than or equal to $M$, our technique provides the feasible solutions for the repeated AJAs, which implies its RT-optimality.  
\end{proof}

\subsection{Flow network model}
\label{subsec:fnm}

The objective of the \textit{maximum flow} problem is to find the maximal flow from a single source to a single sink in the given flow network. To efficiently solve the LP problem for AJA, we transform it into the maximum flow problem by slightly changing JCCs as follows. 

\begin{equation} 
\label{eqt:jcc_fnm}
\sum_{\forall k \in \boldsymbol{K}(t_1,d_{i}(t_1))}{X_{i,k}} \leq c_{i}(t_1), \quad \forall i
\end{equation}

The maximum flow problem for AJA($t_1$) consists of the constraints, including Inequalities~\ref{eqt:jcc_fnm},~\ref{eqt:pcc_g} and~\ref{eqt:ipc_g}, and the objective function as follows.

\begin{definition} (Maximum flow problem for AJA($t_1$))
\label{def:max}

\begin{equation}  
\text{maximize} \quad u(\mathbf{X}) = \sum_{i=1}^{N}{ \sum_{\forall k \in \boldsymbol{K}(t_1,d_{i}(t_1))}{X_{i,k}} } .
\end{equation}

\begin{equation} 
\begin{split}
\text{s.t.}&\\
\sum_{\forall k \in \boldsymbol{K}(t_1,d_{i}(t_1))}{X_{i,k}} &\leq c_{i}(t_1), \quad 1 \leq \forall i \leq N,\\
\sum_{\forall i \in \boldsymbol{J}(k,t_1)}{X_{i,k}} &\leq Cap(W_k), \quad 1 \leq \forall k \leq K,\\
X_{i,k} &\leq l_k, \quad 1 \leq \forall i \leq N, 1 \leq \forall k \leq K.
\end{split}
\end{equation} 
\end{definition}

Note that when the maximum flow occurs in the given problem, the left-hand sides of all JCCs, i.e., Inequality~\ref{eqt:jcc_fnm}, should be equal to $c_i(t_1)$, which satisfies the original JCC equations. We call it the \textit{complete maximum flow}.  

Definition~\ref{def:max} turns our problem into a maximum flow problem. Since all constraints become the upper bounded inequalities, each upper bound acts as the capacity of each edge in the flow network. Using these edges, the flow network is built as follows.

To construct the flow network, we add two additional nodes, the source $n_s$ and the sink $n_e$. Between $n_s$ and $n_e$ in the network, two intermediate layers of nodes are placed, i.e., the first layer contains the nodes of all active jobs and the second layer contains the nodes of all windows. We name each node based on its corresponding job and window. Each edge is denoted by e($n_1$,$n_2$) where $n_1$ and $n_2$ are the source and destination nodes, respectively. $e(n_1,n_2) \leftarrow \{\sigma(n_1,n_2)\}$ denotes that the edge $e(n_1,n_2)$ has its flow capacity $\sigma(n_1,n_2)$.

From $n_s$ to node $\tau_{i}$, a directional edge is inserted and the flow capacity is determined as $c_{i}(t_1)$. From node $W_k$ to $n_e$, a directional edge is inserted with the capacity $Cap(W_k)$. Between node $\tau_{i}$ to node $W_k$, an edge whose flow is constrained by $l_{k}$ is inserted. 

Formally, when AJA($\cdot$) is given, it is transformed into a capacitated network $\mathbf{F}=(V,E)$, where a network $\mathbf{F}$ contains a set of nodes $V$ and a set of edges $E$. We call $\mathbf{F}$ the \textit{flow network for real-time scheduling} (FNRT).

\begin{equation}
\begin{split}
&V=\{n_s, n_e\} \cup \{\tau_{i} | \forall i \} \cup \{W_k | \forall k  \}\\
&E=\{e(n_s,\tau_{i})| \forall i \} \cup \{e(\tau_{i},W_k)| \forall i,k \} \cup \{e(W_k,n_e)| \forall k \} 
\end{split}
\end{equation}

The actual flow on an edge $(n_1,n_2)$ from node $n_1$ to node $n_2$ is denoted by $f(n_1,n_2)$. In FNRT, $f(\tau_{i}, W_k)$ corresponds to $X_{i,k}$ for the given AJA. In addition, the set of capacitated edges $\{e(n_s,\tau_{i})| \forall i \}$ represents JCCs. The sets of capacitated edges $\{e(\tau_{i},W_k)| \forall i,k \}$ and $\{e(W_k,n_e)| \forall k \}$ represent NIPs and PCCs, respectively. The complete maximum flow for FNRT satisfies all constraints even including the original JCCs in the linear programming problem. A flow network example for Figure~\ref{fig:step1} is shown in Figure~\ref{fig:example_implicit}.

\begin{figure}[h!]
\centerline{
	\includegraphics[scale=0.45]{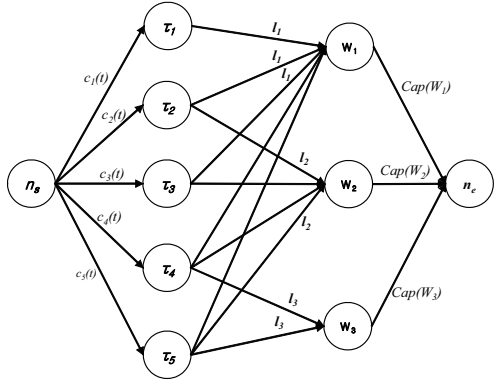}
	}
\caption{Flow network for the example}
\label{fig:example_implicit}
\end{figure}

\begin{algorithm}
\DontPrintSemicolon
\SetKwInOut{Input}{input}\SetKwInOut{Output}{output}
\Input{ the current time $t$ and the task set $\boldsymbol{\tau}$}
\KwData{ $\textbf{F}$ is the flow network to be constructed }
\KwData{ $\textbf{C}=\{Cap(W_k)|$ $k$ is for all $W_k$ in AJA $\}$ }
\KwData{ $e(n_1,n_2)$ is an edge from node $n_1$ to $n_2$ }
$\textbf{F} = null$ \;
$d^{max} =$ $max_{\forall i}\{d_i(t)\}$ \;
$\textbf{B}$ = \texttt{ObtainAllBoundaries}($t$,$d^{max}$) \;
$\textbf{C}$ = \texttt{ComputeWindowCapacities}($t$,$\textbf{B}$) \;
\For{$\tau_{i}(t) \in \boldsymbol{\tau}(t)$}{
	\For{$b_k \in \textbf{B}$}{
		\If {$b_k \in [t, d_{i}(t)]$}{
			Add $e(\tau_{i}, W_k)\leftarrow\{l_k\}$ to $\textbf{F}$ \;
		}
		\If	{$b_k == d_{i}(t)$}{
				Add $e(W_k, n_e)\leftarrow \{Cap(W_k)\}$ to $\textbf{F}$ \;
				break \;
		}
	}
	Add $e(n_s, \tau_{i})\leftarrow\{c_i(t)\}$ to $\textbf{F}$ \;
}
solve($\textbf{F}$) \;
\KwRet $\{X_{i,1}| 1 \leq i \leq N\}$ \;
\caption{\texttt{Schedule}($t$,$\boldsymbol{\tau}$) based on the flow network\label{alg:1}}
\end{algorithm}

A high-level description of the algorithm that is invoked at each scheduling event (boundary) is shown in Algorithm~\ref{alg:1}. Lines 1-12 shows how to construct FNRT. The maximum flow problem is solved by line 13. The computational complexity of constructing FNRT is proportional to the nested loops from lines 5-12 and though, the complexity of solving the maximum flow problem at line 13 is dominant.

The computational complexity of the algorithm depends primarily on both the number of nodes $|V|$ and the number of edges $|E|$. The number of nodes $|V|$ in the flow network is $|\{n_s, n_e\}| + |\{\tau_{i} | \forall i \}| + |\{W_k | \forall k  \}|$. The number of windows is less than or equal to the number of tasks. Thus, $|V|=2N+2$ in the worst case. The number of edges $|E|$ is $|\{e(n_s,\tau_{i})| \forall i \}| + |\{e(\tau_{i},W_k)| \forall i,k \}| + |\{e(W_k,n_e)| \forall k \}|$. In the worst case, $|\{e(n_s,\tau_{i})| \forall i \}| = |\{e(W_k,n_e)| \forall k \}|=N$ and $|\{e(\tau_{i},W_k)| \forall i,k \}|=1+2+...+N=N(N+1)/2$. Thus, $|E|=N^2/2+5N/2$. 

Maximum flow problem has been intensively studied in the graph theory research community and several polynomial algorithms have been found~\cite{gt2014}. In terms of complexity, the algorithms are categorized into two groups, i.e., weakly polynomial and strongly polynomial algorithms~\cite{gt2014}. The complexity of weakly polynomial algorithms is upper-bounded by a combination of $|V|$, $|E|$ and the largest capacity among all edges. On the other hand, the complexity of the strongly polynomial algorithms is upper-bounded by a combination of $|V|$ and $|E|$ alone. A strongly polynomial algorithm with $O(|V||E|)$ complexity~\cite{o2013} was introduced recently and further improvement continues. We use the $O(|V||E|)$ complexity algorithm for line 13 in Algorithm~\ref{alg:1}. Since $|V|$ and $|E|$ are proportional to $N$ and $N^2$ in the worst case respectively, the complexity of Algorithm~\ref{alg:1} is $O(N^3)$. Several other maximum flow algorithms with $O(|V|^3)$ complexity are possible alternatives~\cite{gt1986}. 

\subsection{fn-EDF on the continuous-time model}
\label{subsec:flow_control}

The objective of maximum flow algorithms is to send the maximal flow from the source to the sink. In general, multiple sets of flows that achieve the goal could exist, which implies that the formulated problem could have multiple solutions. Since maximum flow algorithms simply find one of the solutions depending on the preference of the selected algorithm, the amount of flow on each individual edge is not controlled, but only upper-bounded. In a real-time scheduling context, this implies that a task schedule determined by a maximum flow algorithm may be, for example, non-work-conserving or work-conserving from time to time. To a more carefully control the flow over FNRT, additional features of the flow network are required to use.

To control the flow over the network, we consider \textit{Minimum cost flow problem (MCFP)}. Each edge $e(n_1,n_2)$ on the flow network for MCFP contains one more parameter, called \textit{cost}, $w(n_1,n_2)$. When the actual flow $f$ is sent on an edge, the cost of the flow on the edge is calculated by $w \times f$. The objective of MCFP is to find the set of the actual flow $f$ on all edges that minimizes the total cost of the flow, when a total flow value is given. Since the total flow value is known to be the sum of the right-hand sides of JCCs for the given AJA, MCFP is easily applicable. The cost $w$ to each edge allows us to control the actual flows in the network. 

For example, assume that we try to generate an EDF-like schedule for the given AJA($t$). Among all edges $e(\tau_i,W_1)$ directed to node $W_1$, to prioritize the task execution in EDF order, we assign the lower cost to the edges $e(\tau_i,W_1)$ where $\tau_i$ has the earlier deadline. For example, the cost $\{1,...,N\}$ are assigned to all $e(\tau_i,W_1)$ in EDF order. 

\begin{equation}
\label{eqt:edge_cost_1}
e(\tau_i,W_1)\leftarrow\{l_1,i\}, \quad d_i(t) \leq d_{i+1}(t),
\end{equation}
where $e(n_1,n_2)\leftarrow\{\sigma(n_1,n_2),w\}$ denotes that the edge $e(n_1,n_2)$ has its capacity $\sigma(n_1,n_2)$ and cost $w$.  

In the following windows, to make the schedule work-conserving, the costs are increasingly assigned to edges, i.e., the cost $N+k-1$ is assigned to $e(\tau_i,W_k)$ where $2 \leq k \leq K$. 

\begin{equation}
\label{eqt:edge_cost_2}
e(\tau_i,W_k)\leftarrow\{l_k,N+k-1\}, \quad 2 \leq k \leq K.
\end{equation}

To implement this scheduling algorithm, line 8 in Algorithm~\ref{alg:1} should be updated with the previous two Equations,~\ref{eqt:edge_cost_1} and~\ref{eqt:edge_cost_2}, which yields Algorithm~\ref{alg:edf}. We call this \textit{flow network-based EDF} (fn-EDF).

\begin{figure}[]
\centering
\includegraphics[scale=0.45]{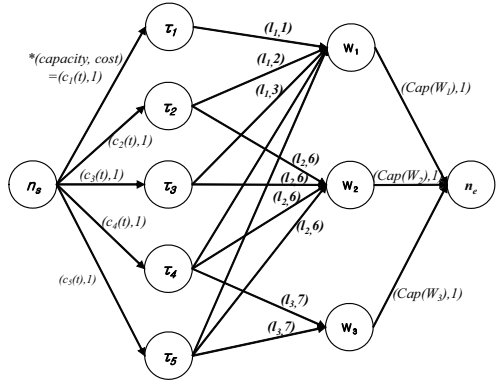}
\caption{Flow control}
\label{fig:example_implicit_cost}
\end{figure}

Figure~\ref{fig:example_implicit_cost} shows the flow network for fn-EDF that is extended from Figure~\ref{fig:example_implicit}.

To analytically describe the cost assignment, we introduce a notion, \textit{cost-slope}, $s_{i}$, at $W_k$ which is defined to be $w(\tau_i,W_{k+1})-w(\tau_i,W_{k})$. Assume that the costs $\{1,...,N\}$ are assigned to all $e(\tau_i,W_1)$ in EDF order and that the cost $N+1$ is assigned to all $e(\tau_i,W_2)$. Then, the earliest deadline task node has the highest cost-slope at $W_1$. We prove that the steeper cost-slope imposes the higher priority.

\begin{theorem}
When two job nodes $\tau_i$ and $\tau_j$ are connected to $W_k$ and $W_{k+1}$ with edges $\{$ $(\tau_i,W_k)$, $(\tau_i,W_{k+1})$, $(\tau_j,W_k)$, $(\tau_j,W_{k+1})$$\}$ in the flow network, if $s_i > s_j$ at $W_k$, then the flow network gives higher priority to the flow from $\tau_i$ to be sent to $W_k$ than that from $\tau_j$. 
\end{theorem}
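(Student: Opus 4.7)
The plan is to prove the statement by a classical exchange argument on the minimum-cost flow. The natural formalization of ``higher priority'' is: in any optimal solution to the minimum-cost flow problem, it cannot happen that $\tau_j$ routes positive flow to $W_k$ while $\tau_i$ routes positive flow to $W_{k+1}$, provided the capacities permit reallocating $\tau_i$'s flow into $W_k$ (and $\tau_j$'s flow into $W_{k+1}$). Stated contrapositively, whenever $\tau_i$ has slack on the edge $(\tau_i, W_k)$, the optimum prefers to push $\tau_i$'s flow into $W_k$ rather than let $\tau_j$ occupy that capacity.

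First, I would assume for contradiction that an optimal flow $f$ has $f(\tau_j, W_k) > 0$ and $f(\tau_i, W_{k+1}) > 0$, together with the residual-capacity conditions $f(\tau_i, W_k) < l_k$ and $f(\tau_j, W_{k+1}) < l_{k+1}$. Define the perturbed flow $f'$ obtained by choosing a sufficiently small $\epsilon > 0$ and setting
\begin{equation}
f'(\tau_i, W_k) = f(\tau_i, W_k) + \epsilon, \quad f'(\tau_j, W_{k+1}) = f(\tau_j, W_{k+1}) + \epsilon,
\end{equation}
\begin{equation}
f'(\tau_j, W_k) = f(\tau_j, W_k) - \epsilon, \quad f'(\tau_i, W_{k+1}) = f(\tau_i, W_{k+1}) - \epsilon,
\end{equation}
with all other edge flows unchanged. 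A direct check at each of the four nodes $\tau_i, \tau_j, W_k, W_{k+1}$ shows flow conservation is preserved, and the choice of $\epsilon$ keeps every edge within its capacity, so $f'$ is feasible.

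Next, I would compute the change in cost. Only four edges contribute, and the change is
\begin{equation}
\Delta = \epsilon \bigl[ w(\tau_i,W_k) - w(\tau_i,W_{k+1}) + w(\tau_j,W_{k+1}) - w(\tau_j,W_k) \bigr] = \epsilon(-s_i + s_j) = -\epsilon(s_i - s_j).
\end{equation}
Since $s_i > s_j$, we have $\Delta < 0$, so $f'$ has strictly smaller cost, contradicting optimality of $f$. Hence no optimum admits the offending configuration, which is exactly the priority claim.

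The main obstacle will be a careful treatment of the capacity preconditions: the exchange is only meaningful when there is room for $\tau_i$'s flow on $(\tau_i,W_k)$ and for $\tau_j$'s on $(\tau_j,W_{k+1})$. I would therefore state the result with this feasibility clause made explicit, and note that if $(\tau_i,W_k)$ is already saturated at $l_k$, then $\tau_i$ has already been assigned the maximum priority the network allows, so no stronger claim can be made. Apart from this bookkeeping, the argument is a standard cycle-cancelling observation in the residual graph, and no other technicality is expected.
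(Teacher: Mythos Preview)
Your proposal is correct and follows essentially the same exchange argument as the paper: both compare the cost of routing a unit (you use an $\epsilon$ amount) of $\tau_i$'s flow to $W_k$ and $\tau_j$'s to $W_{k+1}$ against the swapped configuration, and conclude from $s_i>s_j$ that the swap increases cost, so MCFP prefers $\tau_i$ in $W_k$. Your version is somewhat more careful in making the feasibility preconditions and the contradiction structure explicit, but the underlying idea is identical.
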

\begin{proof}
Assume that one time unit (or flow) is available within $W_k$ and that $\tau_i$ and $\tau_j$ compete for it. In the case where $\tau_i$ occupies one time unit in $W_k$ and $\tau_j$ is delayed to take one time unit in $W^{k+1}$, we assume that the total flow cost is $u(\boldsymbol{F})$, where $\boldsymbol{F}$ is the set of actual flows on all edges. If the units of $\tau_i$ and $\tau_j$ are swapped between $W_k$ and $W_{k+1}$, the new cost $u(\boldsymbol{F'})$ becomes $u(\boldsymbol{F})+s_i-s_j$. Since $s_i > s_j$, $u(\boldsymbol{F'})>u(\boldsymbol{F})$. Therefore, MCFP algorithms prefer to assign the available single time unit in $W_k$ to $\tau_i$ rather than $\tau_j$ in order to minimize the total cost. All available time units are allocated in the same manner and thus, the flow from $\tau_i$ achieves a higher priority of being sent to $W_k$ than that from $\tau_j$.
\end{proof}

\begin{algorithm}
\DontPrintSemicolon
\SetKwInOut{Input}{input}\SetKwInOut{Output}{output}
\Input{ the current time $t$ and the task set $\boldsymbol{\tau}$}
\KwData{ $\boldsymbol{\tau^s}(t)$ is the active job set sorted by EDF }
\KwData{ $\tau_i$ in $\boldsymbol{\tau^s}(t)$ satisfies $d_i(t) \leq d_{i+1}(t)$ }

$\textbf{F} = null$ \;
$d^{max} =$ $max_{\forall i}\{d_i(t)\}$ \;
$\textbf{B}$ = \texttt{ObtainAllBoundaries}($t$,$d^{max}$) \;
$\textbf{C}$ = \texttt{ComputeWindowCapacities}($t$,$\mathbf{B}$) \;
$\boldsymbol{\tau^s}(t)$ = \texttt{sortByEDF}($\boldsymbol{\tau}(t)$) \;
\tcc{Assume $\tau_i$ in $\boldsymbol{\tau^s}(t)$ satisfies $d_i(t) \leq d_{i+1}(t)$} 
\For{$\tau_{i} \in \boldsymbol{\tau^s}(t)$}{
	\For{$b_k \in \mathbf{B}$}{
		\If {$b_k \in [t, d_{i}(t)]$}{
			\eIf {$k == 1$}{
				Add $e(\tau_{i}, W_1)\leftarrow\{l_1,i\}$ to $\mathbf{F}$ \;
			}{
				Add $e(\tau_{i}, W_k)\leftarrow\{l_k,N+k-1\}$ to $\mathbf{F}$ \;
			}
		}
		\If	{$b_k == d_{i}(t)$}{
				Add $e(W_k, n_e)\leftarrow \{Cap(W_k),1\}$ to $\mathbf{F}$ \;
				break \;
		}
	}
	Add $e(n_s, \tau_{i})\leftarrow\{c_i(t),1\}$ to $\mathbf{F}$ \;
}
solve($\mathbf{F}$) \;
\KwRet $\{X_{i,1}| 1 \leq i \leq N\}$ \;

\caption{\texttt{Schedule}($t$,$\boldsymbol{\tau}$) for flow control\label{alg:edf}}
\end{algorithm}

Note that the prioritization based on the cost slopes differs from the traditional prioritization in the real-time scheduling context. The assignment of traditional priorities to tasks may result in deadline misses, whereas prioritization using the cost slopes never attenuates the schedulability ensured by the flow network model. Thus, we can call this a \textit{weak priority} if distinction is needed. 

We also need to consider the maximum magnitude of costs, because the computational complexity of a certain class of MCFP algorithms depends on the cost. To generate the EDF-like schedule, the costs $\{$1, ..., $N+K$$\}$ are used. Since $K$ is proportional to $N$, the maximum magnitude of costs is proportional to $N$. We emphasize that different cost assignments are also possible for various scheduling purposes.

For the line 17 in Algorithm~\ref{alg:edf}, any minimum cost flow algorithm can be used. In terms of complexity, the algorithms are also categorized into two groups, i.e., weakly polynomial and strongly polynomial algorithms~\cite{k2015}. The complexity of the weakly polynomial algorithms is upper-bounded by a combination of $|V|$, $|E|$, the largest cost $w_{max}$ and (or) the largest capacity of all edges. We restrict $w_{max}$ for our scheduling purposes, e.g., $w_{max}\propto N$ for prioritizing task execution. Therefore, not only the strongly polynomial algorithms but also the polynomial algorithms bounded by a combination of $|V|$, $|E|$ and $w_{max}$ are of our interest. Goldberg \textit{et. al} introduced a \textit{cost-scaling algorithm with dynamic trees} having $O(|V||E|log|E|log(|E|w_{max}))$ complexity~\cite{gt1987}. Orlin introduced an \textit{enhanced capacity-scaling algorithm} having $O(|V|log|E|SP_+(|V|,|E|))$ complexity where $SP_+(|V|,|E|)$ denotes the time complexity of solving the single-source shortest path problem~\cite{o1993}. \textit{Dijkstra's algorithm with Fibonacci heaps} is known to provide an $O(|E|+|V|log|V|)$ bound for $SP_+(|V|,|E|)$~\cite{ft1987}. In our context, the complexity of the former algorithm is $O(N^3(logN)^2)$ and that of the latter is $O(N^3logN)$. 

\subsection{fn-EDF on the discrete-time model}
\label{subsec:discreteness}

One of our assumption in this study is that the task parameters, $\{C_i, P_i\}$, are multiples of the system time unit, i.e., integers. Despite the assumption, manipulating tasks for scheduling can generate non-integral numbers, which leads to two issues that we must consider.

First, several maximum flow and minimum cost flow algorithms assume that the parameters of the flow networks are integers. However, our flow network models permit non-integer network parameters. For example, in Algorithm~\ref{alg:1}, the edges $e(W_k,n_e)$ may have non-integral $Cap(W_k)$, which is calculated using $u_i=C_i/P_i$ of several tasks. Nevertheless, since $u_i$'s are rational numbers, this issue can be easily solved by linearly scaling all of the rational numbers in the flow network up to appropriate integers. It does not raise any concern about increasing complexity if a strongly polynomial algorithm is used. This simple technique is sufficient for scheduling tasks on the continuous-time model where the non-integral units of time are allowed to be allocated for task execution.

Second, on the discrete-time model, the allocation of the non-integral units of time for task execution is not allowed. To run tasks on the discrete-time model, all $X_{i,k}$'s of the flow network solution should be integers. 

RT-optimal scheduling on the discrete-time model has been studied by several researchers and BF algorithms were proposed recently~\cite{ghydvj2014}. To ensure RT-optimality, BF algorithms are designed to maintain their deviation from the fluid schedule to be less than one time unit at every boundary. Informally, when the fluid schedule yields a non-integral execution time $c'_i$ between two adjacent boundaries for a task, the algorithm divides the non-integral execution time into mandatory and optional executions where the mandatory execution time is determined to be $\lfloor{c'_i}\rfloor$ and the optional execution time is determined to be $c'_i-\lfloor{c'_i}\rfloor$. The optional execution times for all tasks are accumulated and additionally allocated to the tasks based on specific criteria.  

In Algorithms~\ref{alg:1} and~\ref{alg:edf}, $Cap(W_k)$ is calculated based on the fluid schedule of each task. For the discrete-time model, we instead use the BF schedule to determine $Cap(W_k)$ to be integral. Specifically, in Equation~\ref{eqt:cap_g}, the subtrahend $l_k \times \sum_{ \forall i \notin \boldsymbol{J}(k,t_1)}{C_i/P_i}$ is replaced with a proper integer value produced by BF schedule. 

Assume that BF invokes at the current time $t$. Then, BF returns a set $\mathbf{S_1}$ that contains the (integral) execution time $S_{i,1}$ of $\tau_i$ for the time interval $W_1$. The current time $t$ is assumed to be $b_0$ without loss of consistency with our previous notations.

\begin{equation}
\textbf{S}_{1} = \{S_{i,1} | 1 \leq \forall i \leq N\} = \texttt{runBF}(\mathbf{\tau}, b_0)
\end{equation}

When AJA($t$) is given, BF schedule within the time interval $[t,d^{max}]$ is obtained using the following equation.

\begin{equation}
\label{eqt:shadow_exe}
\mathbf{S}_{k} = \{S_{i,k} | 1 \leq \forall i \leq N\} = \texttt{runBF}(\mathbf{\tau}, b_{k-1}), 1 \leq  \forall k \leq K
\end{equation}

The fluid schedule on the continuous time model ensures that every task has a running rate $r_i=C_i/P_i$ all the time. On the other hand, BF on the discrete-time model has an individual rate $r_i$ that changes slightly for each $W_k$. It is because if $r_i \times l_k$ for $W_k$ is non-integral, it should be adjusted to be integral. Therefore, all boundaries including those made by future jobs within [$t$,$d^{max}(t)$] should be considered to calculate the integral execution time of each active job on the discrete-time model. Previously, the boundaries are defined as $\{d_i(t) | 1 \leq i \leq N\}$ for the continuous-time model. For the discrete-time model, the boundaries are determined as follows.

\begin{equation}
\label{eqt:shadow_b}
\mathbf{B} = \{ d_{i,j} | t \leq d_{i,j} \leq d^{max}(t), \forall i,j \}
\end{equation}

Based on these notions, the flow network-based scheduling algorithm for the discrete-time model runs as follows. When a scheduling event occurs at time $t$, the range of AJA($t$) is determined using $d^{max}(t)$. Within $[t,d^{max}(t)]$, all boundaries are determined using Equation~\ref{eqt:shadow_b}. Then, for every $W_k$ within $[t,d^{max}(t)]$, all $S_{i,k}$ are determined using Equation~\ref{eqt:shadow_exe}. Note that the determination of all $S_{i,k}$ requires consideration of the active jobs and the future jobs within $[t,d^{max}(t)]$ as well. These $S_{i,k}$ are used to calculate $Cap(W_k)$. In other words, BF runs in parallel with the actual flow network-based scheduling algorithm in order to calculate $Cap(W_k)$. It is implemented as the function \texttt{ComputeWindowCapacities}() as shown in Algorithm~\ref{alg:compute}.

\begin{algorithm}
\DontPrintSemicolon
\KwData { $t$ is the current time }
\KwData { $\mathbf{B}$ is the set of boundaries of all jobs in $[t, d^{max}]$ }
\For{$k=1$ \KwTo $K$}{
$\mathbf{S}_{k}$ = \texttt{runBF}($\mathbf{\tau}$, $b_{k-1}$) \;
$Cap(W_k)$ = $l_k \times M$ \;
\For{$\tau_i \in \mathbf{\tau}$}{
\If {$\tau_i$ is inactive in $W_k$}{
$Cap(W_k)$ = $Cap(W_k)$ - $S_{i,k}$ \;
}
}
}
\KwRet $\mathbf{C}=\{Cap(W_k)| 1 \leq k \leq K\}$ \;
\caption{\texttt{ComputeWindowCapacities}($t$,$\textbf{B}$) for discrete-time model\label{alg:compute}}
\end{algorithm}

Especially, we call the combination of Algorithm~\ref{alg:edf} and~\ref{alg:compute} \textit{fn-EDF on the discrete-time model}. To the best of our knowledge, fn-EDF is the first unfair-but-optimal scheduling algorithm for periodic implicit-deadline tasks on the discrete-time model. RT-optimality of fn-EDF is shown in the next section.

For fn-EDF, the number of boundaries of a task $\tau_i$ in the time interval $[t,d^{max}(t)]$, which additionally includes both $t$ and $d^{max}(t)$, is at most $\lceil(d^{max}(t)-t)/P_i\rceil+1$. The longest length of $[t,d^{max}(t)]$ is $P^{max}$ where $P^{max}=max_{\forall i}{\{P_i\}}$ and thus, the number of windows $K$ is proportional to $\sum_{\forall i}{\lceil P^{max}/P_i \rceil}$, where $\sum_{\forall i}{\lceil P^{max}/P_i \rceil}$ is denoted by $N'$. Since the complexity of \texttt{runBF()} is known to be $O(N \cdot P^{max})$, the complexity of \texttt{ComputeWindowCapacities()} is $O(N N' P^{max})$.

The increasing number of windows $K$ also affects the complexity of Algorithm~\ref{alg:edf}. The complexity of the dominant line 17 of Algorithm~\ref{alg:edf} is related to the number of nodes and edges in the flow network. The number of job nodes is $N$ and the number of window nodes is at most $N'$. Thus, $|V|$ is proportional to $N'$ and $|E|$ is proportional to $N \cdot N'$. Therefore, if the cost-scaling algorithm with dynamic trees is selected for MCFP, its complexity becomes $O(N^2 N' (logN')^2)$ in our context. 

Considering \texttt{ComputeWindowCapacities()}, the complexity of fn-EDF is $O(max\{ N N' P^{max}, N^2 N' (logN')^2 \})$. If the enhanced capacity-scaling algorithm is selected, its complexity becomes $O(N^2 N' logN' )$. Then, the complexity of fn-EDF is $O(max\{N N' P^{max}, N^2 N' logN' \})$.

Considering that $H$ is the least-common-multiple of all $P_i$, $P^{max}$ is usually assumed to be much smaller than $H$, which we also assume in this study. However, in the worst case when $P^{max}=H$, the complexity of the proposed algorithm becomes dependent on the number of all jobs in time $0$ to $H$. 


\subsection{RT-optimality of fn-EDF on the discrete-time model}

In order to prove RT-optimality of fn-EDF on the discrete-time model, we need some terminology for this section. $\mathbf{B}$ is defined to contain $a_{i,j}$ of all jobs in the time interval [$0$,$H$] and each element of $\mathbf{B}$ is denoted by $b_k$ where an earlier $b_k$ has a lower $k$. The allocated execution time of $\tau_{i,j}$ in $W_k$ is denoted by $X_{i,j,k}$. $\boldsymbol{J}(k, t)$ is extended as follows.

\begin{equation} 
\boldsymbol{J}(W_k) = \{ (i,j) | W_k \subset  [a_{i,j}, d_{i,j}] \}.
\end{equation}

As introduced in~\cite{msd2010}, we first construct a large FNRT for scheduling real-time tasks considering the long time interval [$0$,$H$] with the following constraints. 

\begin{definition} (Maximum flow problem for $\mathbf{\tau}$ within the time interval [$0$,$H$])
\begin{equation}  
\text{maximize} \quad u(\mathbf{X}) = \sum_{\forall W_k \subset [0,H]}{ \sum_{\forall (i,j) \in \boldsymbol{J}(W_k)}{X_{i,j,k}} } .
\end{equation}
s.t.
\begin{equation} \label{eqt:jcc_h}
\sum_{\forall k \in \boldsymbol{K}(a_{i,j},d_{i,j})}{X_{i,j,k}} \leq C_{i}, \quad \forall \tau_{i,j},
\end{equation}

\begin{equation} \label{eqt:pcc_h}
\sum_{\forall (i,j) \in \boldsymbol{J}(W_k)}{X_{i,j,k}} \leq l_k \times M, \quad \forall W_k \subset [0,H],
\end{equation}

\begin{equation}  \label{eqt:ipc_h}
X_{i,j,k} \leq l_k, , \quad \forall \tau_{i,j} \text{ and } \forall W_k \subset [0,H].
\end{equation}

\end{definition}

In FNRT as shown in $\mathbf{F}$, we assume that $\tau_{i,j}$ are sorted in the increasing arrival time order. For convenience of discussion, the sorted $\tau_{i,j}$ is labeled as $\tau_p$, $1 \leq p \leq P$. In addition, we assume that $W_k$ are sorted in the increasing starting time order. A set of edge flows, $\mathbf{f}$, is assumed to contain all edge flows. The amount of the maximum flow sent from $n_s$ to $n_e$ is denoted by $|\mathbf{f}|$. The complete maximum flow for $\mathbf{F}$ results in a feasible schedule where the flow $f(\tau_{i,j},W_k)$ corresponds to $X_{i,j,k}$. 



In the following lemma, we will define three flow networks $\mathbf{F}$, $\mathbf{F^+}$, and $\mathbf{F^-}$. The elements for each network is distinguished by using the superscript $+$ or $-$, e.g., $f(\cdot)$ denotes the flow for $\mathbf{F}$, $\sigma^+(\cdot)$ denotes an edge capacity for $\mathbf{F^+}$.

\begin{figure}[]
\centering
\includegraphics[scale=0.35]{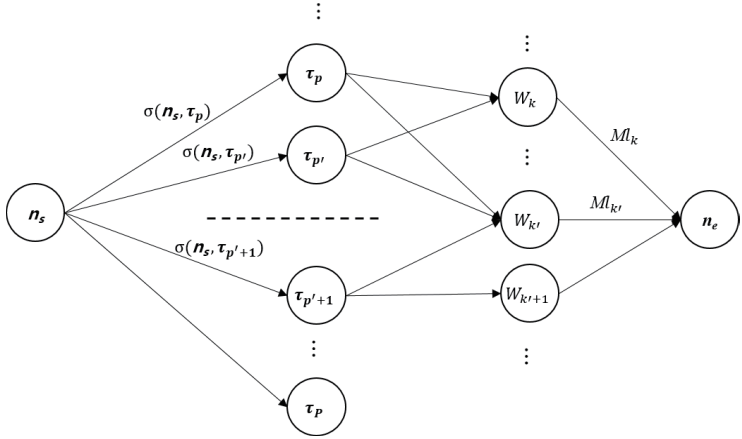}
\caption{A FNRT before decomposition}
\label{fig:split_1}
\end{figure}

\begin{figure}[]
\centering
\includegraphics[scale=0.35]{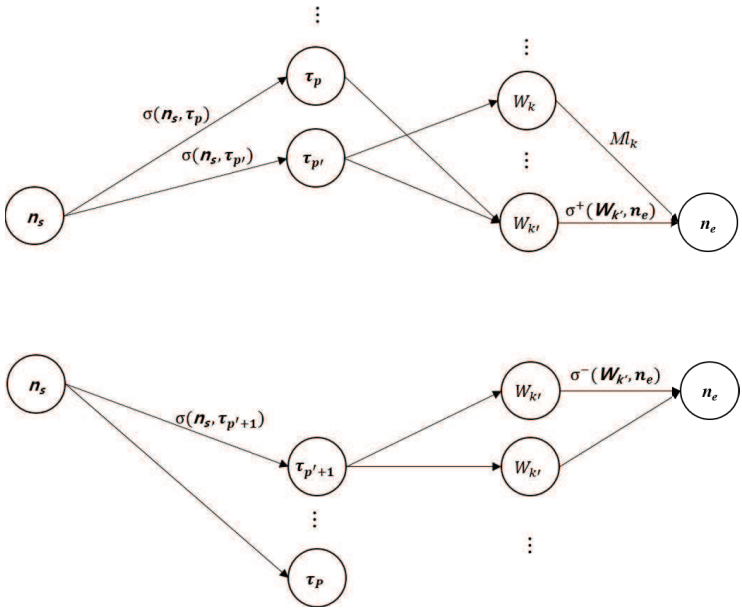}
\caption{A FNRT after decomposition}
\label{fig:split_2}
\end{figure}

\begin{lemma} \label{lemma:decomposition}
(Network decomposition) Assume that FNRT $\mathbf{F}$ is constructed for a task set $\mathbf{\tau}$ where the total utilization of $\mathbf{\tau}$ is less than equal to $M$. $\mathbf{F}$ is assumed to contain two disjoint task node sets, $\mathbf{\tau'}$ and $\mathbf{\tau''}$ where $\mathbf{\tau}=\mathbf{\tau'} \cup \mathbf{\tau''}$, $\mathbf{\tau'}=\{\tau_p|1 \leq p \leq p'\}$, and $\mathbf{\tau''}=\{\tau_p|p'+1 \leq p \leq P\}$ as shown in Figure~\ref{fig:split_1}. Then, the maximum flow $\mathbf{f}$ for $\mathbf{F}$ consists of two flows, $\mathbf{f'}$ sent from $n_s$ through $\mathbf{\tau'}$ to $n_e$ and $\mathbf{f''}$ sent from $n_s$ through $\mathbf{\tau''}$ to $n_e$. 

When $W_{k'} = Child(\mathbf{\tau'}) \cap Child(\mathbf{\tau''})$, suppose that $F$ is decomposed into $\mathbf{F^+}$ and $\mathbf{F^-}$ where $\mathbf{F^+}$ contains $\mathbf{\tau'}$ and their child nodes and $\mathbf{F^-}$ contains $\mathbf{\tau''}$ and their child nodes as shown in Figure~\ref{fig:split_2}. If the edge capacities $\sigma(W_{k'},n_e)$ for $\mathbf{F^+}$ and $\mathbf{F^-}$ are determined as follows, 
\begin{equation} \label{eqt:lemma_decomposition_cap}
\begin{split}
&\sigma^{-}(W_{k'},n_e) = \sum_{\forall \tau_p \in \mathbf{\tau''} \cap Parent(W_{k'})}{f(\tau_p,W_{k'})}, \text{ and} \\
&\sigma^{+}(W_{k'},n_e) = M l_k - \sum_{\forall \tau_p \in \mathbf{\tau''} \cap Parent(W_{k'})}{f(\tau_p,W_{k'})},\\
\end{split}
\end{equation}
then, the maximum flow for $\mathbf{F^+}$ is $|\mathbf{f'}|$ and the maximum flow for $\mathbf{F^-}$ is $|\mathbf{f''}|$.
\end{lemma}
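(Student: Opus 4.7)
The plan is a two-step argument: first show that the restrictions of the maximum flow $\mathbf{f}$ to the two halves of the network are feasible flows in $\mathbf{F^+}$ and $\mathbf{F^-}$ respectively (so the max flows there are at least $|\mathbf{f'}|$ and $|\mathbf{f''}|$), then rule out any strictly larger flow in either sub-network by a cut-and-paste argument that would otherwise contradict the maximality of $\mathbf{f}$ in $\mathbf{F}$.

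For feasibility, I would define $\mathbf{f'}$ as the restriction of $\mathbf{f}$ to the edges touching nodes in $\mathbf{\tau'}$, namely $e(n_s,\tau_p)$ for $\tau_p\in\mathbf{\tau'}$, $e(\tau_p,W_k)$ for $\tau_p\in\mathbf{\tau'}$, together with the matching portion of $e(W_k,n_e)$ for every $W_k\in Child(\mathbf{\tau'})$; and symmetrically $\mathbf{f''}$. Flow conservation at every task node is immediate because each $\tau_p$ lies in exactly one half. At every window node other than the shared $W_{k'}$, conservation is inherited directly from $\mathbf{f}$ since only one half feeds it. At $W_{k'}$ the outflow in $\mathbf{f}$ splits as
\begin{equation*}
f(W_{k'},n_e) = \sum_{\tau_p\in\mathbf{\tau'}\cap Parent(W_{k'})}f(\tau_p,W_{k'}) + \sum_{\tau_p\in\mathbf{\tau''}\cap Parent(W_{k'})}f(\tau_p,W_{k'}),
\end{equation*}
and since this total is bounded by $Ml_k$, the first sum is at most $Ml_k$ minus the second, which by Equation~\ref{eqt:lemma_decomposition_cap} is precisely $\sigma^{+}(W_{k'},n_e)$; the second sum equals $\sigma^{-}(W_{k'},n_e)$ by definition. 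Hence $\mathbf{f'}$ respects the shared edge capacity in $\mathbf{F^+}$ and $\mathbf{f''}$ saturates the corresponding edge in $\mathbf{F^-}$, so both are feasible sub-network flows.

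For maximality, suppose toward a contradiction that $\mathbf{F^+}$ admits a flow $\hat{\mathbf{f}}$ with $|\hat{\mathbf{f}}|>|\mathbf{f'}|$. I would paste $\hat{\mathbf{f}}$ together with the unchanged $\mathbf{f''}$ to form a candidate flow $\tilde{\mathbf{f}}$ in $\mathbf{F}$: use $\hat{\mathbf{f}}$ on every edge touching $\mathbf{\tau'}$, $\mathbf{f''}$ on every edge touching $\mathbf{\tau''}$, and the sum on the only shared edge $e(W_{k'},n_e)$. Capacity constraints inherited from each sub-network hold by construction, and the shared edge carries at most $\sigma^{+}(W_{k'},n_e)+\sigma^{-}(W_{k'},n_e)=Ml_k$, which is exactly its capacity in $\mathbf{F}$. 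Flow conservation holds node by node for the same reasons as in the feasibility step. Therefore $\tilde{\mathbf{f}}$ is a valid flow in $\mathbf{F}$ with value $|\hat{\mathbf{f}}|+|\mathbf{f''}|>|\mathbf{f}|$, contradicting maximality. The symmetric argument handles $\mathbf{F^-}$.

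The main obstacle I anticipate is the bookkeeping at the shared window $W_{k'}$: the split in Equation~\ref{eqt:lemma_decomposition_cap} must be tight enough to let $\mathbf{f'}$ and $\mathbf{f''}$ through in their respective sub-networks, yet loose enough that no capacity is wasted when the two sub-flows are recombined. Both roles are played by the single identity $\sigma^{+}(W_{k'},n_e)+\sigma^{-}(W_{k'},n_e)=Ml_k$. Should the construction later be generalised to several shared windows, the same identity must hold on each such edge, but no new conceptual difficulty arises since the partition of task nodes remains disjoint and the cut-and-paste argument applies edge-by-edge.
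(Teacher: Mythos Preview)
Your proposal is correct and follows essentially the same two-step approach as the paper: first show that the restrictions of $\mathbf{f}$ are feasible in the two sub-networks (lower bound), then argue by contradiction that a strictly larger sub-flow could be pasted back into $\mathbf{F}$ to violate maximality of $\mathbf{f}$ (upper bound). Your cut-and-paste argument for the upper bound is in fact spelled out more carefully than the paper's, which treats the $\mathbf{F^+}$ case in a single sentence and for $\mathbf{F^-}$ splits the reasoning into ``flow through $W_{k'}$'' versus ``detouring flows'' rather than pasting directly; both routes rely on the same identity $\sigma^{+}(W_{k'},n_e)+\sigma^{-}(W_{k'},n_e)=Ml_k$ that you correctly isolate as the crux.
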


\begin{proof}
In $\mathbf{F}$, 
\begin{equation}
\begin{split} \label{eqt:split}
f(W_{k'},n_e)&=\sum_{\forall \tau_p \in Parent(W_{k'})}{f(\tau_p,W_{k'})}\\
&=\sum_{\forall \tau_p \in Parent(W_{k'}) \cap \mathbf{\tau'}}{f(\tau_p,W_{k'})}+\sum_{\forall \tau_p \in Parent(W_{k'}) \cap \mathbf{\tau''}}{f(\tau_p,W_{k'})} \\
&=f(\tau',W_{k'}) + f(\tau'',W_{k'}) \leq Ml_k.\\
\end{split}
\end{equation}
In $\mathbf{F^-}$ after decomposition, since $\sigma^{-}(W_{k'},n_e)$ is sufficient to send $f(\tau'',W_{k'})$ along $e^-(W_{k'},n_e)$ and all the other edge capacities are preserved, the maximum flow $|\mathbf{f^-}|$ is not lower than $|\mathbf{f''}|$.

The maximum flow $|\mathbf{f^-}|$ is not greater than $|\mathbf{f''}|$, neither. First, the flow through the node $W_{k'}$ is upper-bounded by the constraint of the edge capacity $\sigma^{-}(W_{k'},n_e)$. Second, if the other flows detouring the node $W_{k'}$ increase in $\mathbf{F^-}$ compared to those in $\mathbf{F}$, it implies that the flow $\mathbf{f''}$ in $\mathbf{F}$ is not the maximum flow, which contradicts the assumption that $\mathbf{f}$ is the maximum flow for $\mathbf{F}$. Therefore, $|\mathbf{f''}|$ is the maximum flow for $\mathbf{F^-}$.

In $\mathbf{F^+}$ after decomposition, the following inequality holds for the edge capacity based on Equation~\ref{eqt:split}.

\begin{equation}
\begin{split}
\sigma^{+}(W_{k'},n_e)&= M l_k - \sum_{\forall \tau_p \in Parent(W_{k'}) \cap \mathbf{\tau''}}{f(\tau_p,W_{k'})} \\
&\geq \sum_{\forall \tau_p \in Parent(W_{k'}) \cap \mathbf{\tau^-}}{f(\tau_p,W_{k'})} 
\end{split}
\end{equation}

Thus, $\sigma^{+}(W_{k'},n_e)$ is sufficient to send the flow $f(\tau'',W_{k′})$ along $e^+(W_{k'},n_e)$ and all the other edge capacities are preserved, the maximum flow $|\mathbf{f^+}|$ is not lower than $|\mathbf{f'}|$. 

The maximum flow $|\mathbf{f^+}|$ is not greater than $|\mathbf{f'}|$, neither. If $|\mathbf{f^+}|$ is greater than $|\mathbf{f'}|$, it implies that the flow $\mathbf{f'}$ in $\mathbf{F}$ is not the maximum flow, which contradicts the assumption that $\mathbf{f}$ is the maximum flow for $\mathbf{F}$. Therefore, $|\mathbf{f'}|$ is the maximum flow for $\mathbf{F^+}$.
\end{proof}

Lemma~\ref{lemma:decomposition} only considers the case that the set $Child(\mathbf{\tau'}) \cap Child(\mathbf{\tau''})$ contains a single window node $W_{k'}$. If it contains more nodes, then Lemma~\ref{lemma:decomposition} can be easily extended by applying Equation~\ref{eqt:lemma_decomposition_cap} to each $W_{k'}$.

\begin{figure}[]
\centering
\includegraphics[scale=0.35]{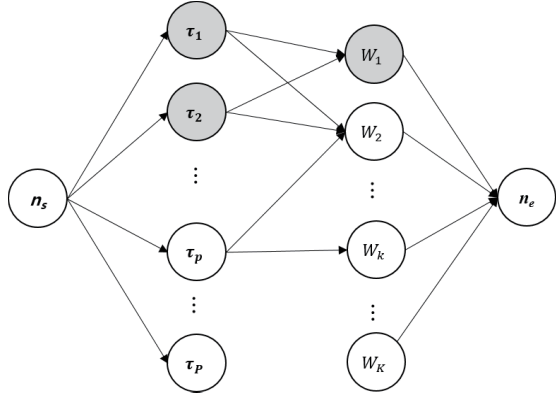}
\caption{A FNRT for Lemma~\ref{lemma:node_removal}}
\label{fig:node_removal}
\end{figure}

\begin{lemma} \label{lemma:node_removal}
(Node removal) Assume that FNRT $\mathbf{F}$ is given with its maximum flow $\mathbf{f}$ as shown in Figure~\ref{fig:node_removal}. If an earliest window node $W_1$ with its parent nodes $Parent(W_1)$ is removed and the edge capacities of $\mathbf{F}$ are adjusted as follows, 
\begin{equation}
\sigma'(n_s,\tau_p) = \sigma(n_s,\tau_p) - f(\tau_p,W_1), \text{ where } \forall \tau_p \in Parent(W_1) 
\end{equation}
then, $\mathbf{F}$ turns into $\mathbf{F'}$ with its maximum flow $|\mathbf{f'}|=|\mathbf{f}|-\sum_{\forall \tau_p \in Parent(W_1)}{|f(\tau_p,W_1)|}$.
\end{lemma}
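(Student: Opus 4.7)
The plan is to establish the claimed equality by proving matching lower and upper bounds on $|\mathbf{f'}|$ through two symmetric flow-transfer constructions between $\mathbf{F}$ and $\mathbf{F'}$. The key structural observation is that the capacity reduction $\sigma'(n_s,\tau_p) = \sigma(n_s,\tau_p) - f(\tau_p,W_1)$ is well-defined (nonnegative), because flow conservation at $\tau_p$ in the original $\mathbf{f}$ gives $f(\tau_p,W_1) \leq \sum_{k} f(\tau_p,W_k) = f(n_s,\tau_p) \leq \sigma(n_s,\tau_p)$.

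For the lower bound, I would produce an explicit feasible flow in $\mathbf{F'}$ of value $|\mathbf{f}| - \sum_{\tau_p \in Parent(W_1)} f(\tau_p,W_1)$ by restricting $\mathbf{f}$ to the surviving edges and deflating each source-to-task edge by the amount that used to be routed through $W_1$. Concretely, set $\tilde{f}(e)=f(e)$ on every edge of $\mathbf{F'}$, except that $\tilde{f}(n_s,\tau_p) = f(n_s,\tau_p) - f(\tau_p,W_1)$ for each $\tau_p \in Parent(W_1)$. Capacity compliance on the source edges is immediate from the definition of $\sigma'$, and conservation at each $\tau_p$ is preserved because exactly the same quantity is removed from both sides of its original balance. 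Summing across the source cut gives the lower bound.

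For the matching upper bound I would argue by contradiction: assume some feasible flow $\mathbf{f'}$ in $\mathbf{F'}$ strictly exceeds the claimed value, and lift it back to $\mathbf{F}$ by restoring the $W_1$-branch verbatim, namely $\hat{f}(\tau_p,W_1) = f(\tau_p,W_1)$ and $\hat{f}(W_1,n_e) = f(W_1,n_e)$, while inflating the source-to-task edges by $\hat{f}(n_s,\tau_p) = f'(n_s,\tau_p) + f(\tau_p,W_1)$. The source-edge capacity $\sigma(n_s,\tau_p)$ exactly accommodates the increase by the definition of $\sigma'$, and conservation at each $\tau_p$ and at $W_1$ follows mechanically. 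The lifted flow $\hat{\mathbf{f}}$ then has value $|\mathbf{f'}| + \sum_{\tau_p \in Parent(W_1)} f(\tau_p,W_1) > |\mathbf{f}|$, contradicting maximality of $\mathbf{f}$ in $\mathbf{F}$.

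The main obstacle is purely one of careful bookkeeping: conservation must be verified at every node whose incident flows change, capacity feasibility must be checked on every modified edge, and the restrict-and-lift operations must be kept symmetric so that they are genuinely inverse to each other on the $W_1$-branch. The fact that $W_1$ is the \emph{earliest} window plays no essential role in the local flow algebra; its significance is rather that the lemma is intended to be applied iteratively as the scheduler advances in time, peeling off successive windows from the front of the network while preserving feasibility of what remains.
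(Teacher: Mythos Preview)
Your proposal is correct and follows essentially the same approach as the paper: the paper's proof is a very terse contradiction argument (if a larger flow existed in $\mathbf{F'}$, it could be routed through the non-$W_1$ windows in $\mathbf{F}$, contradicting maximality of $\mathbf{f}$), which is exactly your upper-bound step. You supply the lower-bound construction and the bookkeeping that the paper leaves implicit, so your version is more complete but not genuinely different in strategy.
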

\begin{proof}
If $|\mathbf{f'}|$ is not the maximum flow for $\mathbf{F'}$, the increased flow can be sent to $n_e$ in $\mathbf{F'}$. It implies that the increased flow can be sent through the other nodes $W_k$ than $W_1$ in $\mathbf{F}$, which contradicts our assumption that $\mathbf{f}$ is the maximum flow for $\mathbf{F}$.
\end{proof}

\begin{lemma} \label{lemma:composition}
(Network composition) Assume that two FNRTs, $\mathbf{F^+}$ and $\mathbf{F^-}$, are given with their complete maximum flows $\mathbf{f^+}$ and $\mathbf{f^-}$ respectively. In addition, assume that both $\mathbf{F^+}$ and $\mathbf{F^-}$ have their own window node $W_{k'}$. If 
two flow networks are combined to be $\mathbf{F}$ by merging the node $W_{k'}$ and adjusting the edge capacity as follows,
\begin{equation}
\sigma(W_{k'},n_e) = \sigma^{+}(W_{k'},n_e) + \sigma^{-}(W_{k'},n_e),
\end{equation}
then $\mathbf{F}$ has its complete maximum flow $|\mathbf{f^+}|+|\mathbf{f^-}|$. 
\end{lemma}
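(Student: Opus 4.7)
The plan is to construct a natural candidate flow on the merged network $\mathbf{F}$ by ``gluing together'' $\mathbf{f^+}$ and $\mathbf{f^-}$, and then to show separately that this candidate is (i) feasible, (ii) of total value $|\mathbf{f^+}|+|\mathbf{f^-}|$, and (iii) both maximum and complete. This mirrors Lemma~\ref{lemma:decomposition} in the reverse direction, so the structural setup is already familiar.

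First, I would define the candidate flow $\mathbf{f}$ on $\mathbf{F}$ edge by edge. For every edge inherited unchanged from either $\mathbf{F^+}$ or $\mathbf{F^-}$, let $f$ equal the corresponding value of $f^+$ or $f^-$. For the merged sink edge, set $f(W_{k'},n_e)=f^+(W_{k'},n_e)+f^-(W_{k'},n_e)$. Feasibility on inherited edges is immediate because their capacities are preserved. On the merged edge, the new capacity $\sigma(W_{k'},n_e)=\sigma^+(W_{k'},n_e)+\sigma^-(W_{k'},n_e)$ accommodates the summed flow. Flow conservation at every task node holds because it held in its source component; at $W_{k'}$, the inflow in $\mathbf{F}$ is
\[
\sum_{\tau_p\in Parent^+(W_{k'})} f^+(\tau_p,W_{k'}) \;+\; \sum_{\tau_p\in Parent^-(W_{k'})} f^-(\tau_p,W_{k'}) = f^+(W_{k'},n_e)+f^-(W_{k'},n_e),
\]
which matches the outflow we just assigned.

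Second, I would compute $|\mathbf{f}|$ and argue maximality. The value of $\mathbf{f}$ equals the total flow leaving $n_s$, which is the sum of $f(n_s,\tau_p)$ over all task nodes of $\mathbf{F}$; since every source edge of $\mathbf{F}$ belongs to exactly one component, this sum splits as $|\mathbf{f^+}|+|\mathbf{f^-}|$. Because $\mathbf{f^+}$ and $\mathbf{f^-}$ are \emph{complete} maximum flows, every source edge $(n_s,\tau_p)$ is saturated in its component, hence also in $\mathbf{F}$. As the total flow in any feasible assignment is bounded above by the sum of source-edge capacities, $\mathbf{f}$ attains this upper bound and is therefore maximum. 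The same saturation immediately yields completeness: each JCC inequality at $(n_s,\tau_p)$ holds with equality.

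The only delicate point is verifying that the inflow at $W_{k'}$ in $\mathbf{F}$ decomposes cleanly as the sum of the two component inflows; this relies on the parent sets $Parent^+(W_{k'})$ and $Parent^-(W_{k'})$ being disjoint in $\mathbf{F}$, which is built into the merging construction since the task nodes of $\mathbf{F^+}$ and $\mathbf{F^-}$ are kept separate and only $W_{k'}$ is identified. If $\mathbf{F^+}$ and $\mathbf{F^-}$ happen to share additional window nodes, the argument extends by applying the same capacity-summing rule and flow-gluing at each shared node independently, since the bookkeeping at distinct windows does not interact.
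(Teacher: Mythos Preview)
Your argument is correct and follows essentially the same route as the paper: glue $\mathbf{f^+}$ and $\mathbf{f^-}$ into a feasible flow on $\mathbf{F}$ to obtain the lower bound $|\mathbf{f}|\ge|\mathbf{f^+}|+|\mathbf{f^-}|$, and then use completeness (saturation of all $(n_s,\tau_p)$ edges) to get the matching upper bound from the source cut. Your version is more explicit about flow conservation at the merged node and about the disjointness of the parent sets, but the underlying idea is identical to the paper's proof.
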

\begin{proof}
In $\mathbf{F}$, all edge capacities are preserved excluding $\sigma(W_{k'},n_e)$ on $e(W_{k'},n_e)$. $\sigma(W_{k'},n_e)$ is large enough to send $f^+(W_{k'},n_e)$ and $f^-(W_{k'},n_e)$ and thus, $|\mathbf{f}|$ is not lower than $|\mathbf{f^+}|+|\mathbf{f^-}|$. 
$|\mathbf{f}|$ is not greater than $|\mathbf{f^+}|+|\mathbf{f^-}|$, neither. Since both $\mathbf{f^+}$ and $\mathbf{f^-}$ are the complete maximum flow, $|\mathbf{f}|$ is upper-bounded by the edge capacities $\sum_{\forall \tau_p}{\sigma(n_s,\tau_p)}$.
\end{proof} 

The following two propositions are straightforward by definition of FNRT.

\begin{proposition} \label{lemma:sched_fn}
If a real-time scheduler $\mathscr{A}$ feasibly schedules a task set $\mathbf{\tau}$ during the time interval [$0$,$H$], its schedule $X_{i,j,k}$ corresponds to the flow $f(\tau_{i,j},W_k)$ of the complete maximum flow for FNRT $\mathbf{F}$ constructed for [$0$,$H$].
\end{proposition}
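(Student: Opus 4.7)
The plan is to build an explicit correspondence: given a feasible schedule with values $X_{i,j,k}$, I would define a flow $\mathbf{f}$ on $\mathbf{F}$, verify that it respects every edge capacity and is conservative at every interior node, and then show that it is in fact the complete maximum flow. Since the proposition is a correspondence claim rather than a structural theorem, the work is essentially a careful unpacking of definitions.

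First I would define the flow assignment as follows. For every job-window edge, set $f(\tau_{i,j}, W_k) := X_{i,j,k}$. For each source edge, set $f(n_s,\tau_{i,j}) := \sum_{k \in \boldsymbol{K}(a_{i,j},d_{i,j})} X_{i,j,k}$, and for each sink edge set $f(W_k, n_e) := \sum_{(i,j) \in \boldsymbol{J}(W_k)} X_{i,j,k}$. Flow conservation at every $\tau_{i,j}$ node and every $W_k$ node then holds by construction, since the inflow is defined to equal the outflow in each case.

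Second, I would verify the three capacity conditions. Feasibility of $\mathscr{A}$ implies that $\{X_{i,j,k}\}$ satisfies the counterparts of Inequalities~\ref{eqt:jcc_h}, \ref{eqt:pcc_h}, and~\ref{eqt:ipc_h}: each job completes its $C_i$ units of work within $[a_{i,j},d_{i,j}]$, no window exceeds its processing capacity $M l_k$, and no job runs in parallel with itself in a window. These translate one-for-one into the capacity bounds on $e(n_s,\tau_{i,j})$, $e(W_k,n_e)$, and $e(\tau_{i,j},W_k)$ in $\mathbf{F}$, so $\mathbf{f}$ is an admissible flow.

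Third, I would argue maximality and completeness. Because each job fully completes, every source edge is saturated with $f(n_s,\tau_{i,j}) = C_i$; the total flow value is therefore $\sum_{\tau_{i,j}} C_i$, which equals the sum of the source-edge capacities and hence upper-bounds any feasible flow out of $n_s$. Thus $\mathbf{f}$ attains the maximum flow, and because each JCC inequality~\ref{eqt:jcc_h} holds with equality under $\mathbf{f}$, it qualifies as the complete maximum flow in the sense defined after Definition~\ref{def:max}.

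The main obstacle is purely notational — keeping the per-job indexing $(i,j)$ of the scheduler aligned with the per-node indexing $\tau_{i,j}$ of the FNRT, and ensuring the window index $k$ in $\boldsymbol{K}(a_{i,j},d_{i,j})$ matches the windows whose nodes are parents of $e(\tau_{i,j},W_k)$. Once these index sets are aligned explicitly, the proof is essentially a one-line substitution per constraint, which is why the authors describe it as straightforward from the definition of FNRT.
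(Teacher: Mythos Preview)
Your proposal is correct and is precisely the unpacking the paper leaves implicit: the authors give no argument beyond stating that the proposition is ``straightforward by definition of FNRT,'' and your three steps (define $\mathbf{f}$ from $X_{i,j,k}$, check capacities via JCC/PCC/NIP, observe the source cut is saturated so the flow is complete and maximal) are exactly how that definition cashes out. There is nothing to compare against, and your only caveat about index alignment is the right one.
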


\begin{proposition} \label{lemma:fn_sched}
When FNRT $\mathbf{F}$ is constructed for a task set $\mathbf{\tau}$ during [$0$,$H$], the complete maximum flow for $\mathbf{F}$ contains the flow $f(\tau_{i,j},W_k)$ corresponding to the feasible schedule $X_{i,j,k}$.
\end{proposition}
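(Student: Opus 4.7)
The plan is to unpack the definition of FNRT and show that the flows on job-to-window edges in a complete maximum flow directly satisfy the three scheduling constraints (JCC, PCC, NIP) formulated earlier, after which McNaughton's wrap-around algorithm turns the resulting allocations into an executable schedule. Formally, I would set $X_{i,j,k} := f(\tau_{i,j}, W_k)$ for every job-window edge present in $\mathbf{F}$, and then verify each constraint in turn.

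First I would dispatch NIP: because the edge $e(\tau_{i,j}, W_k)$ is constructed with capacity $l_k$, the edge-capacity constraint of the flow yields $X_{i,j,k} = f(\tau_{i,j}, W_k) \leq l_k$ for every $\tau_{i,j}$ and every $W_k \subset [a_{i,j},d_{i,j}]$, which is exactly Inequality~\ref{eqt:ipc_h}. Next, for PCC, I would apply flow conservation at each window node $W_k$: the total flow leaving $W_k$ equals the total flow entering it, so
\begin{equation}
\sum_{(i,j) \in \boldsymbol{J}(W_k)} X_{i,j,k} \;=\; \sum_{(i,j) \in \boldsymbol{J}(W_k)} f(\tau_{i,j}, W_k) \;=\; f(W_k, n_e) \;\leq\; \sigma(W_k, n_e) \;=\; M l_k,
\end{equation}
matching Inequality~\ref{eqt:pcc_h}. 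For JCC, flow conservation at the job node $\tau_{i,j}$ gives $\sum_{k \in \boldsymbol{K}(a_{i,j},d_{i,j})} X_{i,j,k} = f(n_s, \tau_{i,j}) \leq \sigma(n_s, \tau_{i,j}) = C_i$. This is where the word \emph{complete} becomes essential: by the paper's definition, a complete maximum flow is one in which the JCC inequalities (the source-side edges) are tight, i.e.\ $f(n_s, \tau_{i,j}) = C_i$ for every job. Hence JCC holds with equality as required by the original formulation.

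Once the $X_{i,j,k}$ are shown to satisfy JCC, PCC and NIP simultaneously, all that remains is to exhibit an actual per-processor schedule within each window that realises these allocations without any job running on two processors at once. This is classical: within $W_k$ one has a collection of per-job execution demands $X_{i,j,k}$ each bounded by $l_k$ and summing to at most $M l_k$, which is precisely the input to McNaughton's wrap-around algorithm~\cite{m1959} already invoked earlier in the paper. Applying it window-by-window yields a feasible schedule for $\mathbf{\tau}$ on $M$ processors over $[0,H]$ whose $X_{i,j,k}$ coincide with the flows $f(\tau_{i,j},W_k)$, completing the proof.

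I do not anticipate a hard step here — the authors themselves flag the result as "straightforward by definition." The only mild subtlety is recognising that the completeness assumption is what lifts JCC from an inequality (the edge-capacity bound) to the equality demanded by the scheduling problem; everything else is a transcription of edge-capacity bounds and flow conservation, plus one appeal to McNaughton's algorithm to convert the allocations into a concrete processor assignment.
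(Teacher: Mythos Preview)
Your proposal is correct and matches the paper's approach: the authors give no explicit proof and simply remark that the proposition is ``straightforward by definition of FNRT,'' which is precisely what you have unpacked---edge capacities give NIP and PCC via flow conservation, completeness upgrades the JCC inequality to equality, and McNaughton's algorithm realises the resulting allocations. There is nothing to add.
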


\begin{theorem} \label{thm:BF_fnEDF_feasible}
If the feasible BF schedule exists for a given task set $\mathbf{\tau}$, then fn-EDF on the discrete-time model provides a feasible solution. 
\end{theorem}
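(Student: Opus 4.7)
The plan is to use the three structural lemmas (decomposition, node removal, composition) together with the correspondence Propositions~\ref{lemma:sched_fn} and~\ref{lemma:fn_sched} to reduce the feasibility of each fn-EDF AJA to the existence of the assumed feasible BF schedule, and then iterate inductively over scheduling events up to $H$.

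First, by Proposition~\ref{lemma:sched_fn} applied to the assumed feasible BF schedule of $\mathbf{\tau}$ over $[0,H]$, there exists a large FNRT $\mathbf{F_H}$ and a complete maximum flow $\mathbf{f_{BF}}$ in which $f_{BF}(\tau_{i,j},W_k)$ equals BF's per-window allocation for each job instance $\tau_{i,j}$. This is the global object that fn-EDF will approximate window-by-window.

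Next, at a scheduling event $t$, I would partition the job instances in $\mathbf{F_H}$ into $\mathbf{\tau'}$ (jobs active at $t$) and $\mathbf{\tau''}$ (everything else: jobs completed before $t$, jobs released after $t$ with windows inside $[t,d^{max}(t)]$, and jobs beyond $d^{max}(t)$). Applying Lemma~\ref{lemma:decomposition}, extended in the obvious way to all shared window nodes $W_1,\ldots,W_K$ spanning $[t,d^{max}(t)]$, I split $\mathbf{F_H}$ into $\mathbf{F^+}$ and $\mathbf{F^-}$ with the prescribed capacities
\begin{equation*}
\sigma^+(W_k,n_e) = M l_k - \sum_{\tau_p \in \mathbf{\tau''} \cap Parent(W_k)}{f_{BF}(\tau_p,W_k)}, \quad 1 \leq k \leq K.
\end{equation*}
The crucial identification is that the second term equals $\sum_{i \notin \boldsymbol{J}(k,t)}{S_{i,k}}$ as produced by Algorithm~\ref{alg:compute}, so $\sigma^+(W_k,n_e) = Cap(W_k)$; thus $\mathbf{F^+}$ coincides with the FNRT that fn-EDF constructs for AJA($t$). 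Lemma~\ref{lemma:decomposition} then guarantees that $\mathbf{F^+}$ admits a complete maximum flow of magnitude equal to the sum of remaining executions of the active jobs, which by Proposition~\ref{lemma:fn_sched} corresponds to a feasible schedule within AJA($t$). Since fn-EDF's MCFP solver only attaches EDF-derived costs that select among complete maximum flows (and does not rule one out when it exists), fn-EDF's output is a feasible solution for AJA($t$). For the inductive step, after fn-EDF commits the allocations in $W_1$, Lemma~\ref{lemma:node_removal} strips $W_1$ with the correct residual capacities, and Lemma~\ref{lemma:composition} glues in any newly released jobs at the next boundary; the invariant ``the current AJA admits a feasible solution'' thus propagates through all boundaries in $[0,H]$.

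The principal obstacle will be the identity
\begin{equation*}
\sum_{i \notin \boldsymbol{J}(k,t)}{S_{i,k}} = \sum_{\tau_p \in \mathbf{\tau''} \cap Parent(W_k)}{f_{BF}(\tau_p,W_k)},
\end{equation*}
which bridges Algorithm~\ref{alg:compute}'s BF-based capacity reservation with the decomposition lemma's capacity formula. Establishing it amounts to showing that the \texttt{runBF}$(\mathbf{\tau},b_{k-1})$ subroutine reproduces, for inactive tasks in $W_k$, the same per-window allocations as BF's global schedule on $[0,H]$ restricted to $W_k$. This is essentially a boundary-Markov property of BF: its decisions in $[b_{k-1},b_k)$ depend only on the remaining execution requirements and the arrival pattern starting from $b_{k-1}$, not on its past history. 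Once this alignment is made precise, the remainder of the proof follows mechanically from the decomposition, removal, and composition lemmas together with the two correspondence propositions.
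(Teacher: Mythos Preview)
Your proposal is correct and follows essentially the same inductive route as the paper: apply Proposition~\ref{lemma:sched_fn} to the assumed BF schedule to obtain a global FNRT with complete maximum flow, decompose via Lemma~\ref{lemma:decomposition} into the active-job subnetwork (identified with fn-EDF's AJA) and the rest, then use Lemmas~\ref{lemma:node_removal} and~\ref{lemma:composition} to reassemble a network with complete maximum flow for the next boundary, iterating to $H$. The only notable difference is that you explicitly isolate the identity $\sum_{i \notin \boldsymbol{J}(k,t)} S_{i,k} = \sum_{\tau_p \in \mathbf{\tau''} \cap Parent(W_k)} f_{BF}(\tau_p,W_k)$ as the ``principal obstacle'' and name the boundary-Markov property of BF needed to justify it, whereas the paper simply writes $\sigma^{a}(W_{k'},n_e) = M l_k - \sum S_{i,j,k'}$ and proceeds without comment, tacitly assuming that Algorithm~\ref{alg:compute}'s \texttt{runBF} calls reproduce the per-window reservations of the fixed global BF schedule; your version is a slight refinement of, not a departure from, the paper's argument.
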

\begin{proof}
It is proved by induction on increasing boundaries from $0$ to $H$. When a task set $\mathbf{\tau}$ is given, BF generates a feasible schedule by determining the execution time $S_{i,j,k}$ allocated for each $\tau_{i,j}$ in $W_k$, where $W_k$ is the time window placed in [$0$,$H$].  \\

At time 0, according to Proposition~\ref{lemma:sched_fn}, if BF provides a feasible schedule, FNRT $\mathbf{F}$ for [$0$,$H$] can be constructed with its complete maximum flow $\mathbf{f}$. $\mathbf{f}$ includes $f(\tau_{i,j},W_k)$ that corresponds to $S_{i,j,k}$ of BF. 

According to Lemma~\ref{lemma:decomposition}, $\mathbf{F}$ can be decomposed into two flow networks $\mathbf{F^a}$ and $\mathbf{F^b}$ where $\mathbf{F^a}$ contains the active job node set $\tau^a$ within the time interval $W_1$, i.e., $\{\tau_{i,j}|\forall \tau_{i,j} \in Parent(W_1)\}$, and $\mathbf{F^b}$ contains the rest job nodes. When $\mathbf{F}$ is decomposed, the shared edges $\{e(W_{k'},n_e) | W_{k'} \in \cup_{\forall \tau_{i,j} \in \mathbf{\tau^a}}{Child(\tau_{i,j})}\}$ are split to $\mathbf{F^a}$ and $\mathbf{F^b}$ and their capacities are determined by Lemma~\ref{lemma:decomposition}. 

\begin{equation}
\begin{split}
\sigma^{b}(W_{k'},n_e) &= \sum_{\forall \tau_p \in \mathbf{\tau^b} \cap \in Parent(W_{k'})}{f(\tau_p,W_{k'})} \\
&= \sum_{\forall \tau_p \in \mathbf{\tau^b} \cap \in Parent(W_{k'})}{S_{i,j,k'}},\\
\sigma^{a}(W_{k'},n_e) &= M l_k - \sum_{\forall \tau_p \in \mathbf{\tau^b} \cap \in Parent(W_{k'})}{f(\tau_p,W_{k'})} \\
&= M l_k - \sum_{\forall \tau_p \in \mathbf{\tau^b} \cap \in Parent(W_{k'})}{S_{i,j,k'}}, \\
\end{split}
\end{equation}

Then, both $\mathbf{F^a}$ and $\mathbf{F^b}$ have their complete maximum flow $\mathbf{f^a}$ and $\mathbf{f^b}$ by Lemma~\ref{lemma:decomposition}.

Assume that fn-EDF finds a maximum flow $\mathbf{f'^a}$ for $\mathbf{F^a}$, where $|\mathbf{f'^a}|=|\mathbf{f^a}|$. Note that all $\sigma^a(W_{k'},n_e)$ for $\mathbf{F^a}$ are still preserved and $\mathbf{f'^a}$ is upper-bounded by the capacities. Then, based on Lemma~\ref{lemma:node_removal}, the first window node and its parent nodes can be removed from $\mathbf{F^a}$, where we call the FNRT having the first window node removed $\mathbf{F'^a}$. $\mathbf{F'^a}$ has its complete maximum flow $|\mathbf{f'^a}|-|f'^a(W_1,n_e)|$. 

Then, $\mathbf{F'^a}$ with its complete maximum flow $|\mathbf{f'^a}|-|f'^a(W_1,n_e)|$ and $\mathbf{F^b}$ with its complete maximum flow $|\mathbf{f^b}|$ can be combined based on Lemma~\ref{lemma:composition}. The edge capacities $\sigma(W_{k'},n_e)$ then comes to $M l_k$. The combined FNRT has its complete maximum flow $|\mathbf{f'^a}|-|f'^a(W_1,n_e)|+|\mathbf{f^b}|$. 

Since the final FNRT has its complete maximum flow, according to Proposition~\ref{lemma:fn_sched}, at least a feasible schedule exists in the time interval [$W_2$,$H$]. In addition, note that $\mathbf{f^b}$ corresponding BF schedule is still preserved. Therefore, the same procedure above can be iteratively conducted up to $H$. 
\end{proof}

\begin{corollary} 
fn-EDF on the discrete-time model is an RT-optimal real-time scheduling algorithm.
\end{corollary}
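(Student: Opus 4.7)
The plan is to derive the corollary as an immediate consequence of Theorem~\ref{thm:BF_fnEDF_feasible}, mirroring the structure of the earlier corollary that followed Theorem~\ref{thm:optimality}. The only external fact I need is the known RT-optimality of the BF algorithm on the discrete-time model (cited in the paper via~\cite{zqmm2011,ghydvj2014}): whenever the total utilization $U$ of a periodic implicit-deadline task set satisfies $U \leq M$, BF produces a feasible schedule meeting all deadlines up to $H$.

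Given this, the argument is a one-line chain. First, I would assume an arbitrary task set $\boldsymbol{\tau}$ with $U \leq M$, which is the hypothesis appearing in the definition of RT-optimality. Next, I would invoke the RT-optimality of BF to conclude that a feasible BF schedule exists for $\boldsymbol{\tau}$ on the discrete-time model; this is precisely the hypothesis of Theorem~\ref{thm:BF_fnEDF_feasible}. Applying that theorem then yields a feasible fn-EDF schedule for $\boldsymbol{\tau}$ on the discrete-time model. Since this holds for every $\boldsymbol{\tau}$ with $U \leq M$, fn-EDF satisfies the definition of RT-optimality.

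There is essentially no obstacle here beyond verifying that the definitions line up: the \emph{feasible BF schedule} assumed in Theorem~\ref{thm:BF_fnEDF_feasible} is exactly what BF's RT-optimality delivers under the utilization bound, and the \emph{feasible solution} returned by fn-EDF corresponds, by the flow-network construction, to an execution assignment $X_{i,k}$ in which every job receives exactly $C_i$ units of processing within its window, i.e., a schedule meeting all deadlines. No further induction or flow-network manipulation is needed, since Lemmas~\ref{lemma:decomposition}--\ref{lemma:composition} and the induction on boundaries were already absorbed into Theorem~\ref{thm:BF_fnEDF_feasible}. Thus the corollary reduces to a two-step composition: (BF RT-optimality) $\Rightarrow$ (feasible BF schedule) $\Rightarrow$ (feasible fn-EDF schedule by Theorem~\ref{thm:BF_fnEDF_feasible}).
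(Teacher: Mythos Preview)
Your proposal is correct and matches the paper's own proof essentially line for line: both invoke the known RT-optimality of BF (i.e., $U \leq M$ implies a feasible BF schedule) and then apply Theorem~\ref{thm:BF_fnEDF_feasible} to conclude that fn-EDF is feasible, hence RT-optimal. There is nothing to add or adjust.
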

\begin{proof}
According to Theorem~\ref{thm:BF_fnEDF_feasible}, if BF schedule exists for a given task set, fn-EDF provides a feasible solution. It is also known that if $U$ of the task set is less than or equal to $M$, BF schedule exists. Therefore, if $U$ of a task set is less than or equal to $M$, fn-EDF provides the feasible solutions, which implies its RT-optimality.  
\end{proof}

\section{Experiments}
\label{sec:experiments}

We experimentally evaluated the performance of fn-EDF compared with BF on the discrete-time model, as BF is the latest technique in the problem domain. \textit{SimSo} was used as the simulation environment~\cite{chd2014}. We performed four sets of experiments for 2, 4, 6 and 8 processors where the task sets were randomly generated. The experimental procedure was the following.

\begin{itemize}
\setlength\itemsep{0.01em}
\item Each task set contained the following number of tasks, $\{2 M, 2.5 M, 3 M, 3.5 M, 4 M\}$.
\item The unit time was set to 1.
\item For a task set, $P_i$'s were randomly generated using the uniform distribution within [5, 20].
\item If $H$ was greater than its upper-bound, $600,000$, $P_i$'s were discarded and this procedure restarted. The upper-bound helped avoid long time experiments. 
\item The total utilization $U$ for tasks was set to $M$.
\item $u_i$'s were randomly generated based on the fixed $U$ using the algorithm in \cite{s2006}. The algorithm generates random numbers within their predefined ranges when their fixed sum is given. 
\item Based on each $P_i$ and $u_i$, $C_i$ was calculated by $max\{1.0,\left\lfloor u_i \times P_i \right\rfloor\}$. Then, all $C_i$ were determined to be integral and the final $U$ became slightly less than or equal to $M$. 
\item We ran both BF and fn-EDF for the task set.
\item We repeated this procedure until more than 100 task sets were tested for each result. 
\end{itemize}

\begin{figure}
	\label{fig:experiments}
	\centering
	\subfloat[Preemptions on 2 processors]{
		\includegraphics[scale=0.35]{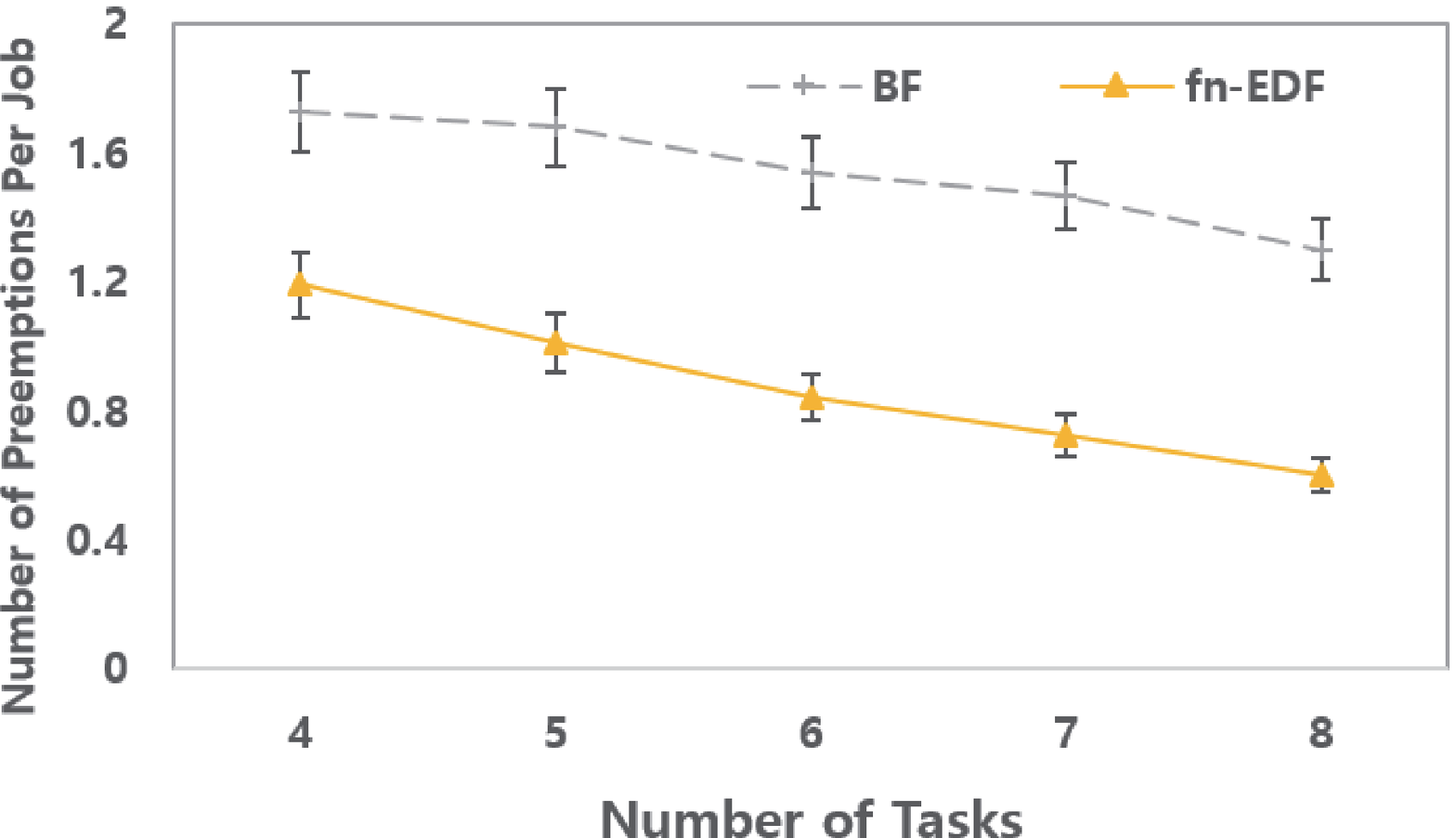}
	}
	\subfloat[Migrations on 2 processors]{		
		\includegraphics[scale=0.35]{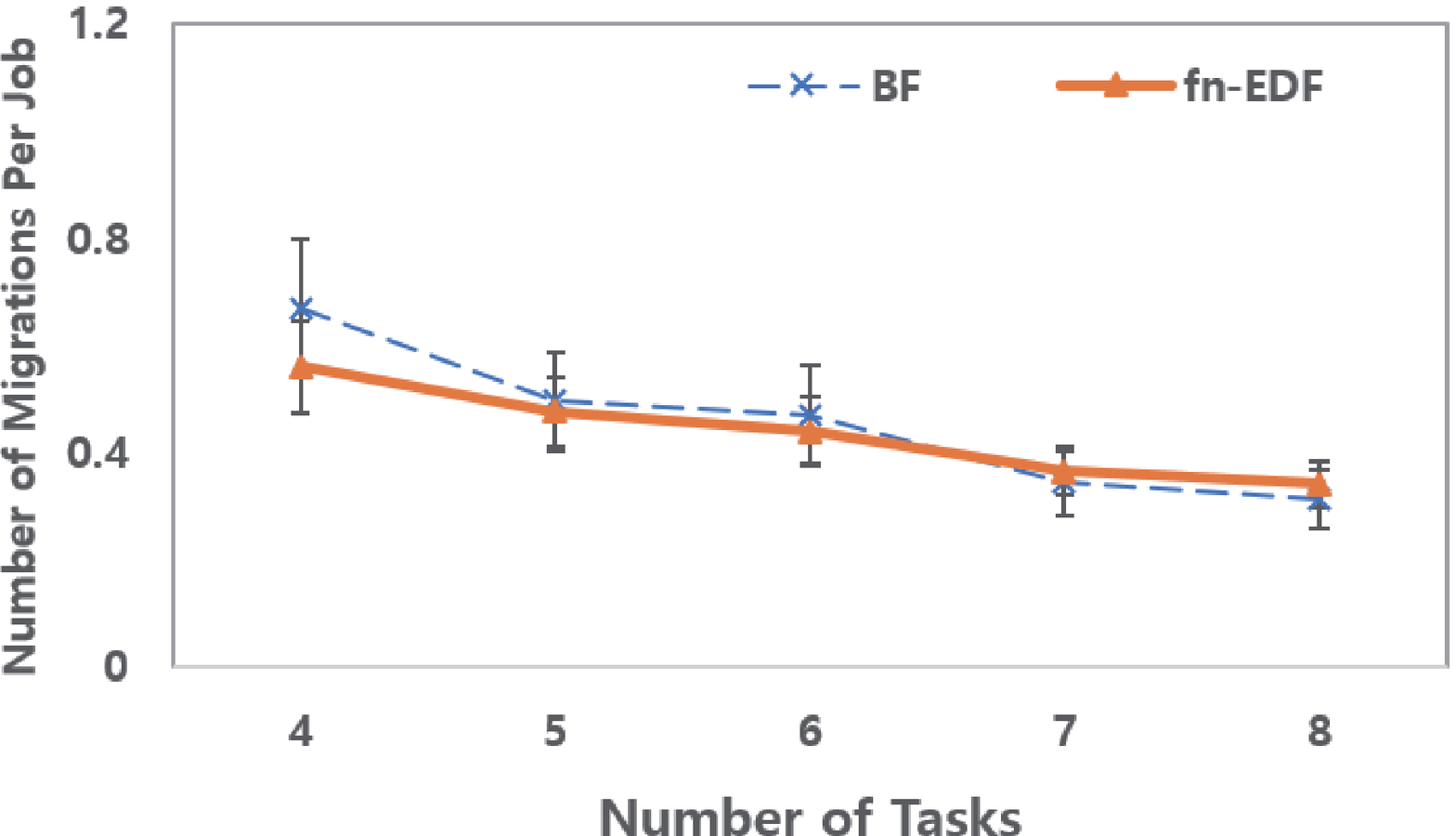}
	}\\
	\subfloat[Preemptions on 4 processors]{
		\includegraphics[scale=0.35]{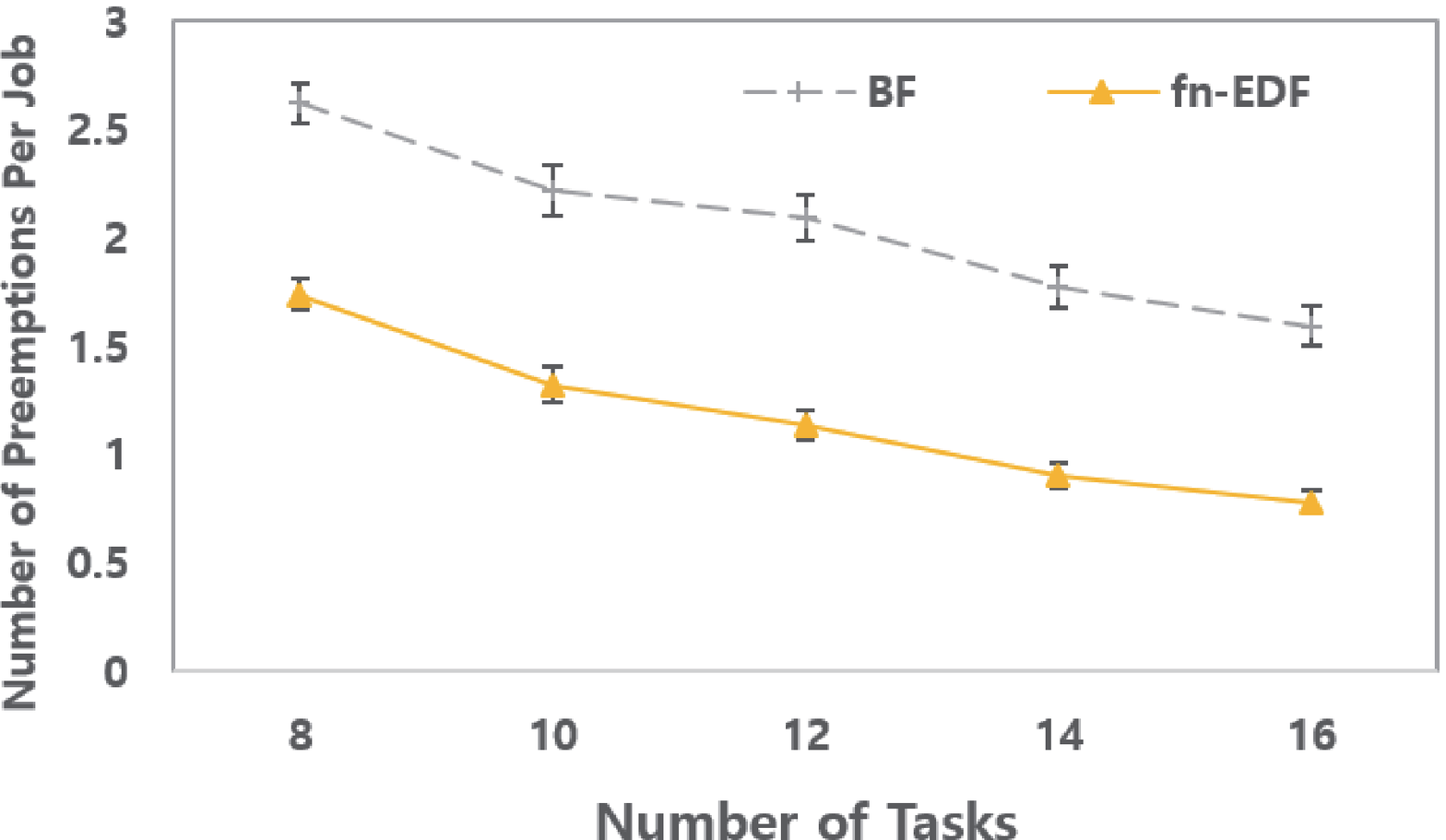}
	}
	\subfloat[Migrations on 4 processors]{		
		\includegraphics[scale=0.35]{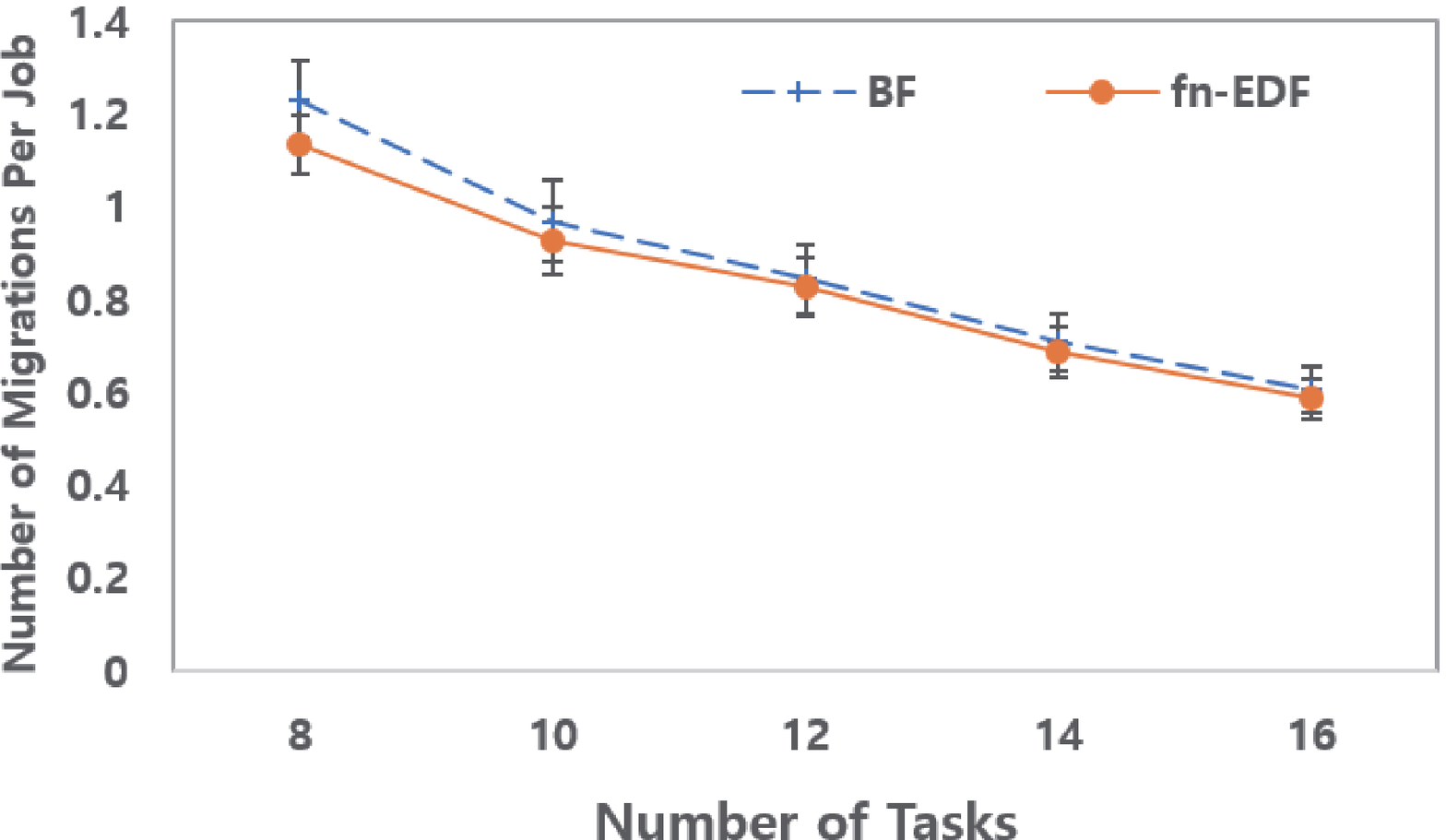}
	}\\
	\subfloat[Preemptions on 6 processors]{
		\includegraphics[scale=0.35]{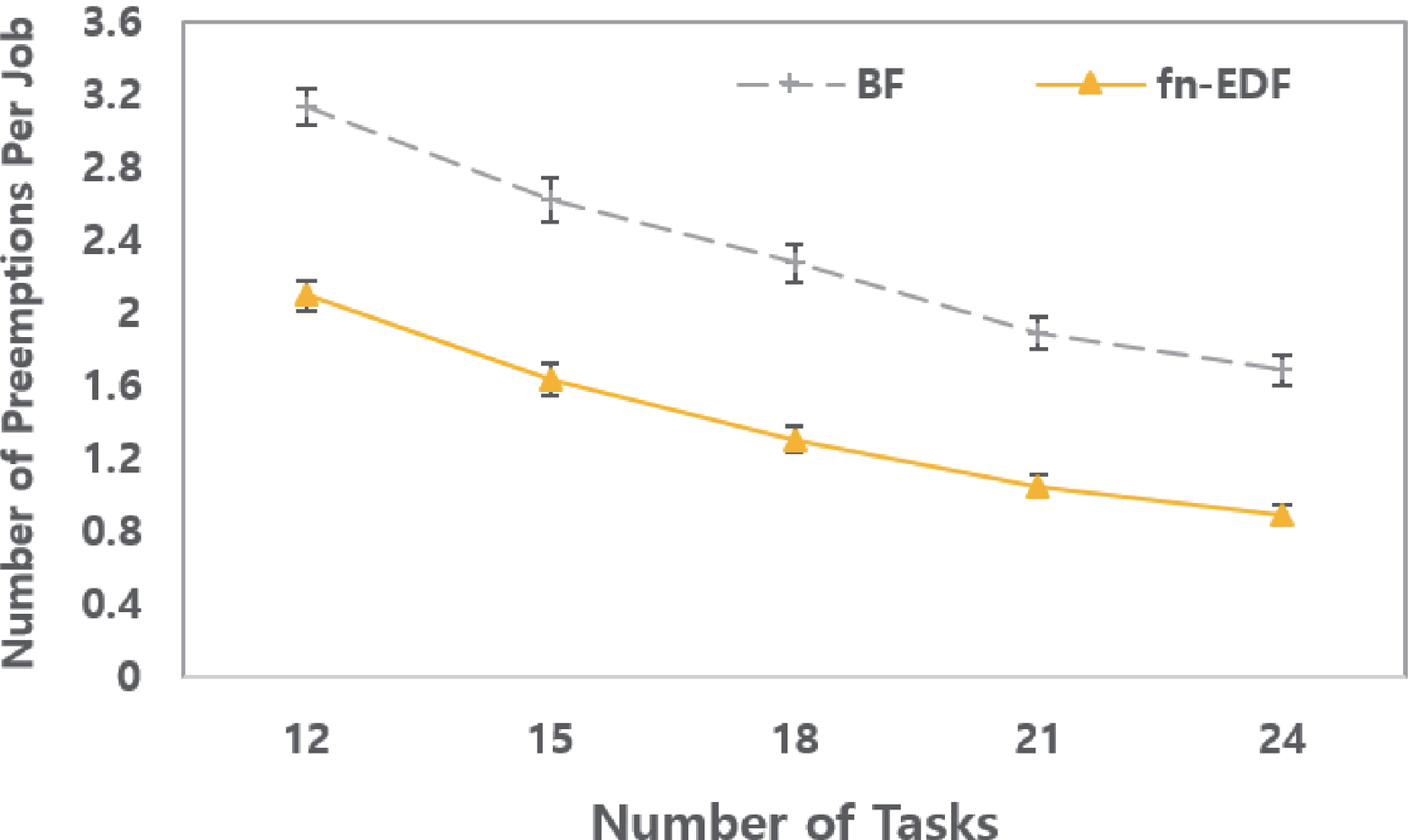}
	}
	\subfloat[Migrations on 6 processors]{		
		\includegraphics[scale=0.35]{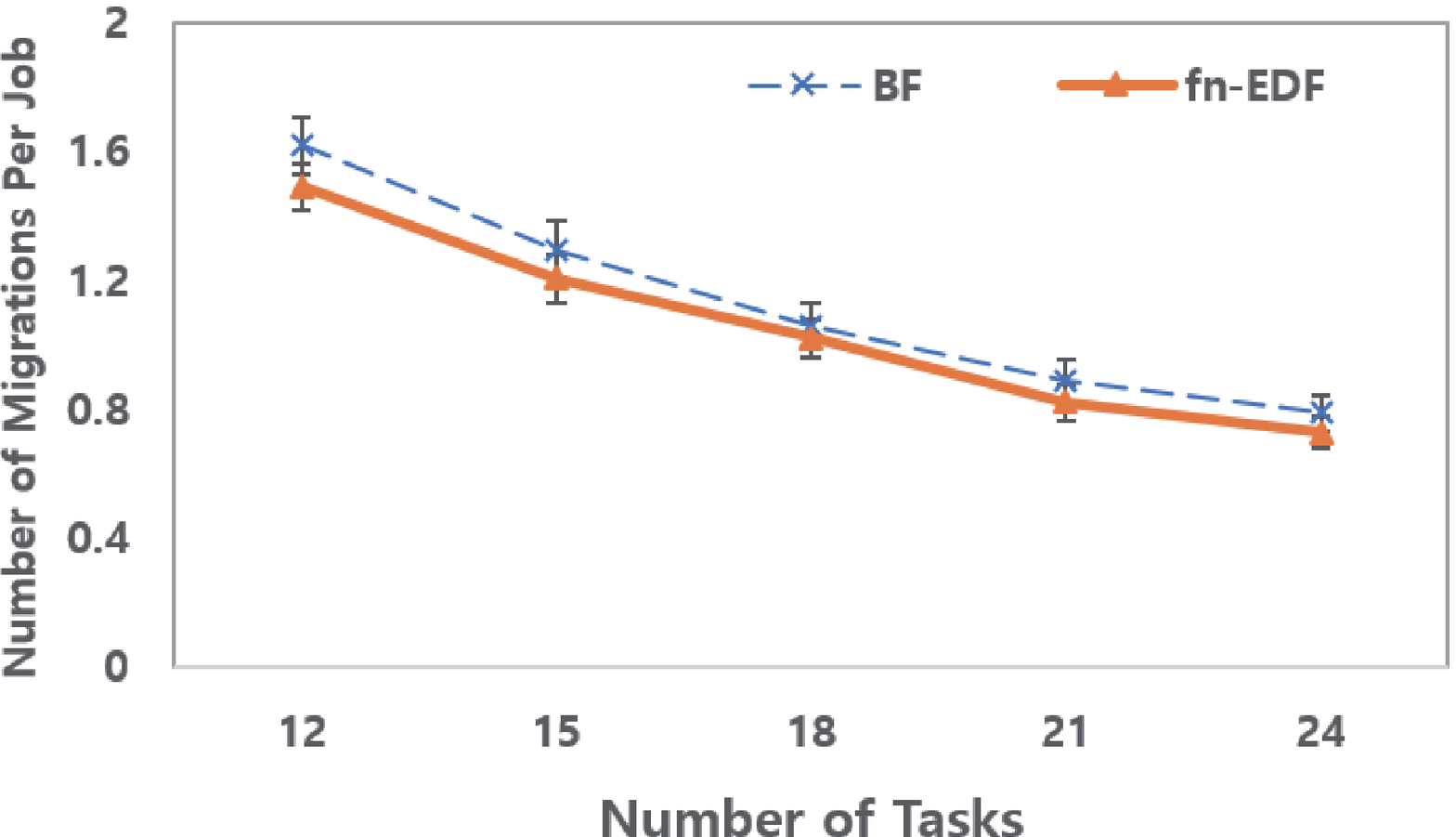}
	}\\		
	\subfloat[Preemptions on 8 processors]{
		\includegraphics[scale=0.35]{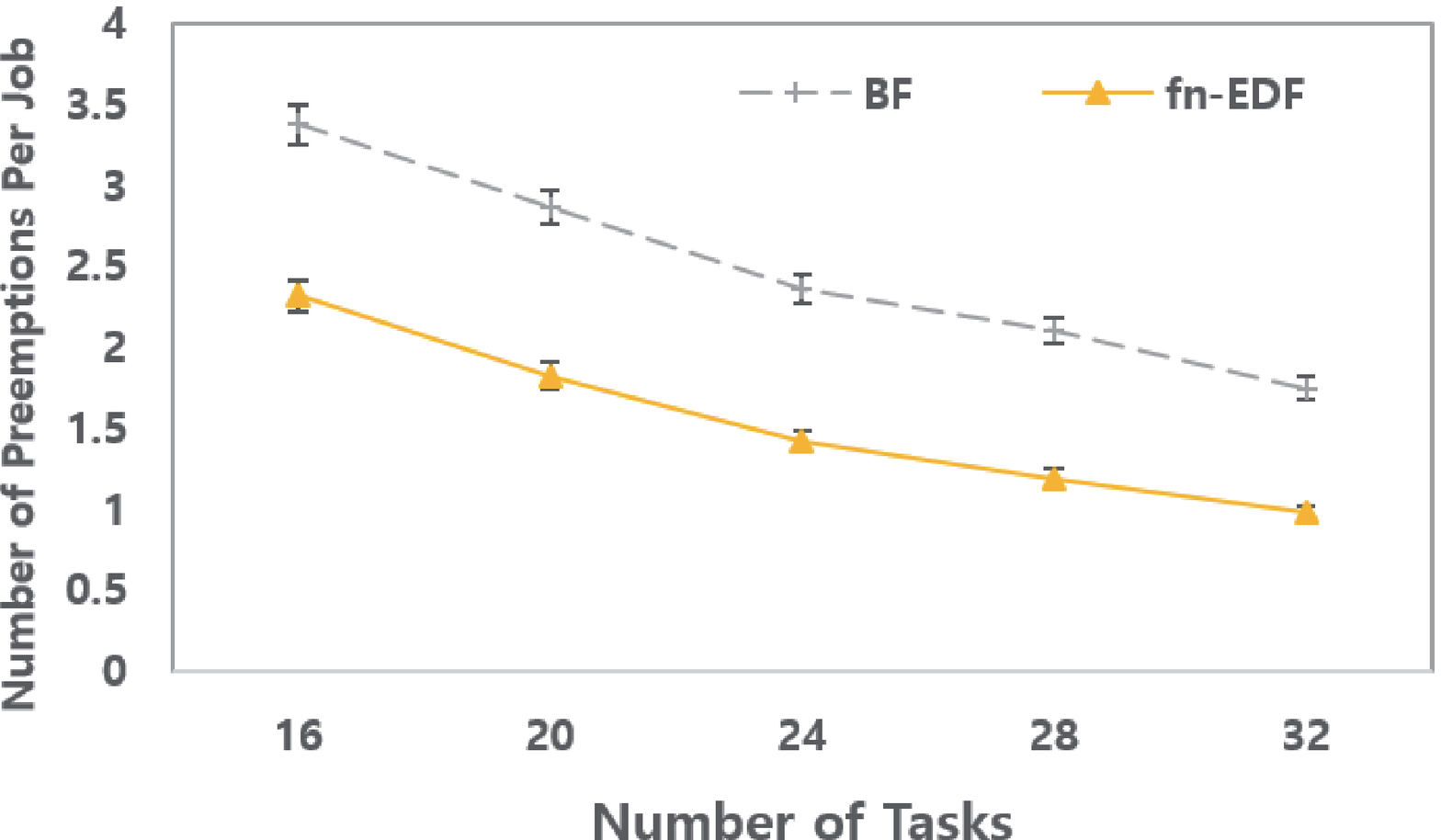}
	}
	\subfloat[Migrations on 8 processors]{		
		\includegraphics[scale=0.35]{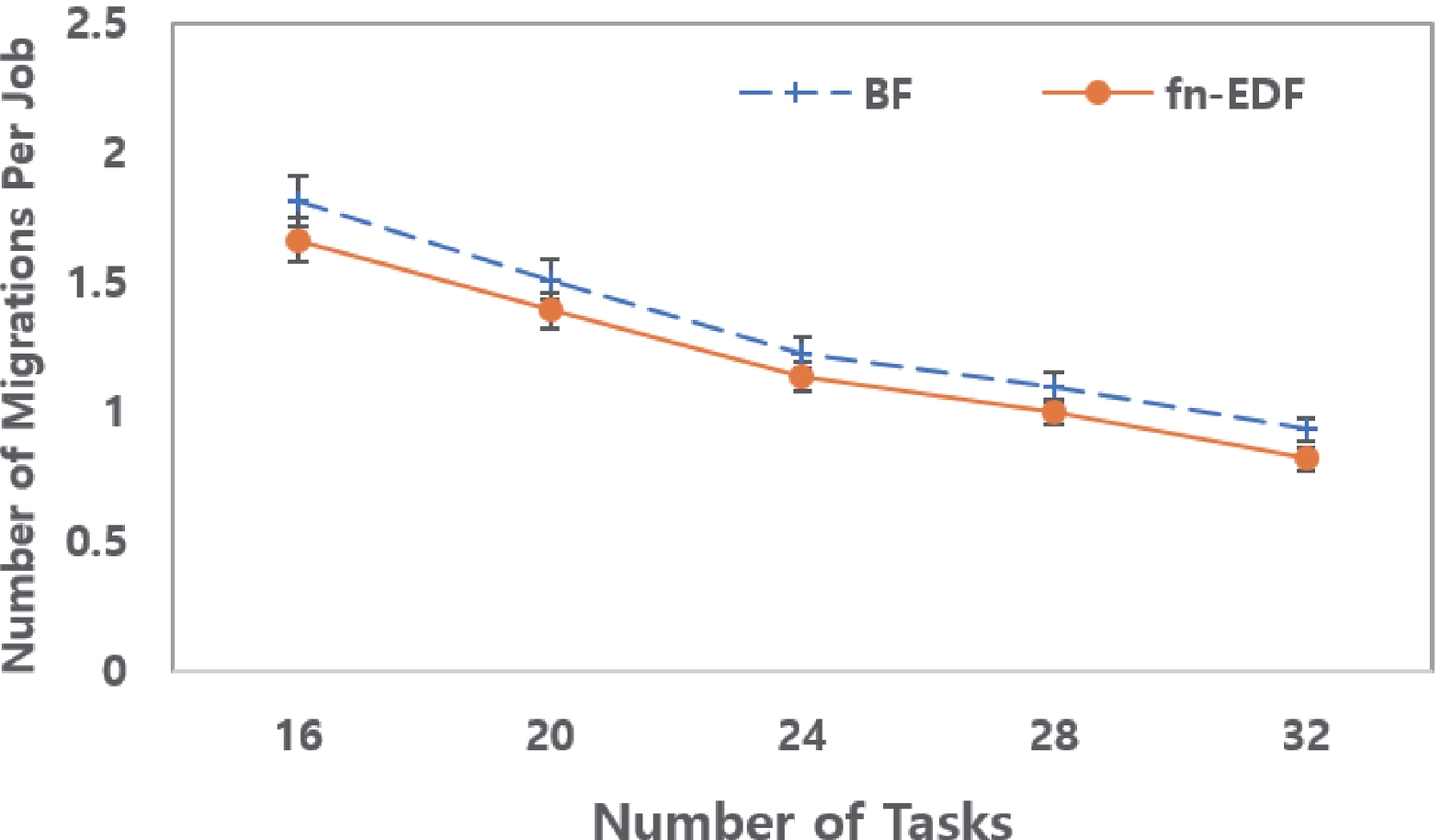}
	}		
	\caption{Number of preemptions and migrations per job. The error bar around each data point represents 95\% confidence interval of that data point.}
\end{figure}

Figure 12 shows \textit{the average numbers of preemptions and migrations per job} with BF and fn-EDF. The figures show that the number of preemptions with fn-EDF is always less than that with BF, while the number of migrations with fn-EDF is comparable to or slightly less than that with BF. Moreover, the \textit{preemption reduction ratio} of fn-EDF over BF, i.e., the number of preemptions with fn-EDF divided by that with BF, becomes more significant, when more tasks run on the system. For example, it reaches around $0.5$ when $4M$ tasks run on $M$ processors. It is caused by the difference between BF and fn-EDF. BF generally forces every task to consume its execution time for each window, whereas fn-EDF allows a task to skip its execution in some windows, which significantly reduces the number of preemptions. 

In the current implementation of fn-EDF, the lower indexed processor is simply allocated to the active job with the earlier deadline. In order to reduce the number of preemptions and migrations further for fn-EDF, the processor should be allocated to jobs more carefully by an improved heuristic, that is one of our future works.

\section{Discussion}
\label{sec:discussion}

In Section~\ref{subsec:flow_control} and~\ref{subsec:discreteness}, we have discussed the computational complexities of both versions of fn-EDF. Although the proposed algorithms have the polynomial complexity, the computationally lighter algorithms should be pursued. We believe that further reduction of their computational complexity is promising for two reasons. First, the organization of the flow network models in this study is rather fixed, e.g., four layers of nodes including two intermediate layers of job and window nodes. Therefore, there might exist an improved algorithm specialized for our flow network models. For example, if the source and sink nodes are aligned with the window and the job nodes respectively while retaining their connections, the flow network becomes a bipartite network having a single source and a single sink node. A bipartite network is defined as a network where the nodes are partitioned into two subsets and each edge has one endpoint in the one subset and the other endpoint in the other subset~\cite{amo1993}. Thus, the maximum flow algorithms specialized for the bipartite networks are applicable to our network models. Second, research effort in the graph theory community to achieve better maximum flow algorithms continues. For example, one of the conclusion given by~\cite{gt2014} states that an $O(|V||E|/|V|^{\epsilon})$ strongly polynomial algorithm may exist. As we have established the flow network models for the online multiprocessor real-time scheduling problems, we have the opportunity to directly utilize the advance in the modern graph theory.

We also explore the possibility of extending the proposed approach for the real-time tasks having different arriving patterns, e.g., sporadic tasks. On the continuous-time model, fn-EDF can be easily extended for sporadic tasks by letting each task own its portion $u_i$ out of the system's total processing capacity $M$, even when the task is inactive as proposed in~\cite{lfspb2010}. On the discrete-time model, extending fn-EDF for sporadic tasks is not so simple. It is because fn-EDF on the discrete-time model is based on BF that has each task's portion (or the rate $r_i$) out of the system's total processing capacity slightly changing over windows. Thus, it is difficult to determine how much capacity portion should be given to a task when it is inactive. Extending fn-EDF for sporadic tasks is one of our future works.



\section{Related work}
\label{sec:related}

There have been several attempts to formulate the multiprocessor real-time scheduling problem as an LP problem. Since these attempts considered all jobs of each task from time $0$ to the hyper-period for problem formulation, they have been used as the preliminary offline technique to compensate for the shortcomings of the subsequent online algorithms~\cite{msd2010,ljp2013}.  Lawler \textit{et al.} formulated the multiprocessor constrained-deadline periodic task scheduling as an LP problem for the first time~\cite{ec1981}. Transforming this linear programming problem into a network flow problem was considered in~\cite{msd2010}.

In addition, several RT-optimal solutions have been proposed for the multiprocessor implicit-deadline periodic task scheduling problem. Most of them rely on the notion of fairness which was first proposed by Baruah \textit{et al.}~\cite{bcpv1993,bcpv1996}. The perfect fairness can be obtained by the RT-optimal but non-realistic fluid schedule and the fairness-based RT-optimal scheduling algorithms were designed to follow the fluid schedule as much as possible. The first algorithm of this kind was \textit{PFfair} scheduler that invokes every time quantum to restrict the deviation of resource allocation for each task from its fluid schedule within one time quantum. Several variants were proposed later such as \textit{pseudo-deadline} (PD)~\cite{bgp1995}, PD$^2$~\cite{as2001}, etc. However, Pfair and its variants are known to lead to frequent preemption and migration of tasks, which motivated several alternatives to arise.

Zhu \textit{et al.} introduced \textit{Boundary fair} (BF) algorithm that invokes every boundary to restrict the deviation of resource allocation for each task from its fluid schedule within one unit time~\cite{zmm2003}. Since the scheduler-invocation frequency of BF is lower than that of PFair, the scheduling overheads are reduced. Recently, \textit{BF$^2$} algorithm was proposed as an extension of BF to handle sporadic tasks~\cite{ghydvj2014}. While BF and BF$^2$ were based on the discrete-time model, there was another branch of algorithms which supported a similar boundary fairness based on the continuous-time model~\cite{crj2006,flspb2011}. Readers who are interested in the fairness-based RT-optimal scheduling algorithms can refer to~\cite{ra2011,ghydvj2014}. Although these boundary fairness-based scheduling algorithms often outperform the proportionate fairness-based scheduling in terms of reducing the number of preemptions and migrations, the need for further minimizing the scheduling overhead remains.

Nelissen \textit{et al.} observed that respecting the fairness causes the high overheads~\cite{gvjd2011}. Thus, they proposed \textit{U-EDF}, an unfair but still RT-optimal scheduling algorithm, and they showed that U-EDF performs better than the existing fairness-based algorithms in reducing the scheduling overhead. U-EDF is known to be based on the continuous-time model. 



\section{Conclusion}
\label{sec:conclusion}

In this study, we formulated the multiprocessor real-time task scheduling problems by identifying three types of constraints and suggested the flow network models to solve the formulated problems efficiently. We discussed the potential use of this framework for addressing several interesting scheduling problems without losing the scheduling optimality. Based on the framework, the unfair-but-optimal scheduling algorithm, fn-EDF, was proposed for both continuous and discrete-time models. Our experiments showed that fn-EDF outperforms an existing BF algorithm in terms of the number of preemptions while maintaining a comparable number of task migrations. 

Formulating the multiprocessor real-time task scheduling problem by identifying its constraints is beneficial for expanding its possible applications. Simply by manipulating the constraints, this formulation can be flexibly extended to solve other related problems. One of the potential applications include the example of managing the processor idle time, shown in Section~\ref{sec:motiv_examples}. In addition to this horizontal expansion of its application, several existing techniques for the flow networks can be adapted for specific real-time scheduling purposes, e.g., \textit{load balancing} techniques that aim at balancing the flow on the selected edges of a given network~\cite{ph2005}. Our future research will consider these issues.

%
%

\begin{acknowledgements}
This research was supported by Basic Science Research Program through the National Research Foundation of Korea(NRF) funded by the Minist (NRF-2015R1D1A1A01057018).
\end{acknowledgements}

\bibliographystyle{authordate1}


%
%

\end{document}